\newcommand{\stkout}[1]{\ifmmode\text{\sout{\ensuremath{#1}}}\else\sout{#1}\fi}
\newtheorem{theorem}{Theorem}
\newcommand{\bracket}[3]{\langle#1|#2|#3\rangle}
\newcommand{\expect}[1]{\langle#1\rangle}
\begin{document}
	

\title{Simulating quantum instruments with projective measurements and quantum post-processing}

\author{Shishir Khandelwal}
\affiliation{Physics Department and NanoLund, Lund University, Box 118, 22100 Lund, Sweden.}

\author{Armin Tavakoli}
\affiliation{Physics Department and NanoLund, Lund University, Box 118, 22100 Lund, Sweden.}

\begin{abstract}
Quantum instruments  describe both the classical outcome and the updated state associated with a quantum measurement.  We ask whether these processes can be simulated using only a natural subset of resources, namely projective measurements on the system and quantum processing of the post-measurement states. We show that the simulability of instruments can be connected to an entanglement classification problem. This leads to a computationally efficient necessary condition for simulation of generic instruments and to a complete characterisation for qubits. We use this to address relevant quantum information tasks, namely (i) the noise-tolerance of standard qubit unsharp measurements, (ii) non-projective advantages in  information-disturbance trade-offs, and (iii) increased sequential Bell inequality violations under projective measurements. Moreover, we consider also $d$-dimensional L\"uders instruments that correspond to weak versions of standard basis measurements and show that for large $d$ these can permit scalable noise-advantages over projective implementations. 
\end{abstract}

\date{\today}

\maketitle

\section{Introduction}

It is often relevant to ask if complex quantum resources can be simulated by simpler quantum resources \cite{Chitambar2019}. An important example of this concerns quantum measurements. Measurements are commonly associated with orthogonal projections, but their  full scope corresponds to the more general notion of positive operator-valued measures (POVMs). Non-projective measurements play an important role in quantum information protocols, but their physical implementation is more complex than that of projective measurements. It is therefore relevant to ask which POVMs  can be simulated using projective measurements and which cannot \cite{Oszmaniec2017}. In recent years, many experiments have realised POVMs that defy simulation with projective measurements \cite{Gomez2016, Tavakoli2020, Smania2020, Martinez2023, Wang2023, Feng2023}.

While POVMs describe the classical output of a measurement, they do not describe the update of the quantum state. A complete description of the measurement process must address both the classical and quantum output. This is known as a quantum instrument \cite{QMeasurement} (see Fig.~\ref{Fig_scenario}a). Quantum instruments are not only an essential component of quantum theory, but they are also responsible for the fundamental trade-off between extracting information from a system and disturbing its state. Experiments have  observed these trade-offs and demonstrated their applications \cite{Andersen2006, Lim2014, Schiavon2017, Anwer2020, Foletto2020, Anwer2021}. 
\begin{figure}[t!]
	\centering
	\includegraphics[width=0.9\columnwidth]{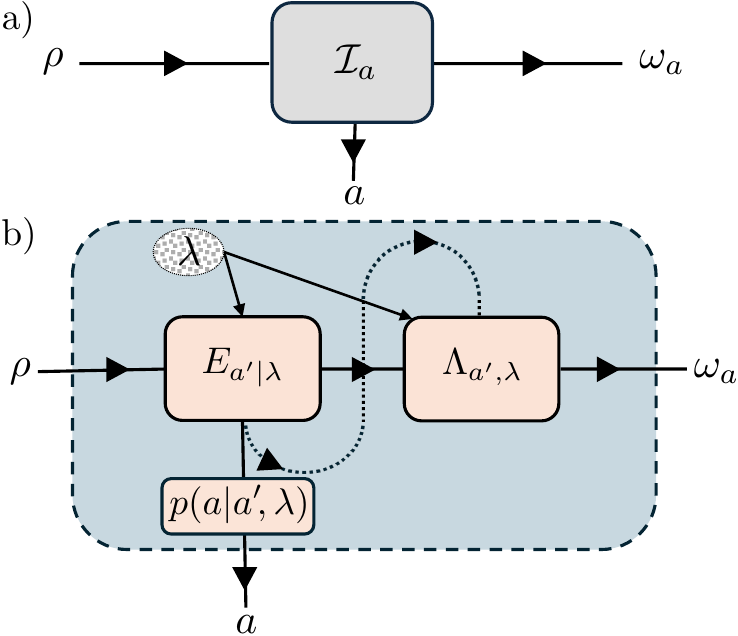}
	\caption{\textbf{Quantum instruments.}  (a) A quantum instrument transforms an incoming state $\rho$ into a classical outcome $a$ and an updated quantum state $\omega_a$. (b) A projective simulable instrument uses a classical random variable, $\lambda$, to select a projective measurement, $\{E_{a'|\lambda}\}$, then passes the projected state through a quantum channel, $\Lambda_{a',\lambda}$, and finally uses a rule $p(a|a',\lambda)$ to decide the classical output.}\label{Fig_scenario}
\end{figure}

Here, we ask whether quantum instruments can be simulated by an experimenter who only has access to standard projective measurements on the system, classical post-processing of the measurement outcomes, quantum processing of the post-measurement state and classical randomness; see Fig.~\ref{Fig_scenario}b. Thus, these simulations do not have access to the coherent non-projective capabilities associated with generic instruments.  This can be seen as a natural generalisation from projective simulation of POVMs to now account for the full measurement process. We show that the characterisation of simulable instruments can be relaxed to a type of entanglement classification problem. Using semidefinite programming (SDP) techniques (see the review \cite{Tavakoli2024}), this allows us to efficiently determine proofs of genuinely non-projective instruments. Furthermore, for qubit systems, we prove that this characterisation is both necessary and sufficient. This makes it a handy tool, which we showcase by considering three quantum information problems. Firstly, we consider noisy  qubit instruments, corresponding to unsharp  measurements, and determine the visibilities needed to make a projective simulation possible. Secondly, we show how non-projectivity implies advantages in information-disturbance trade-offs. Thirdly, we show that sequential Bell inequality tests, that use only projective measurements and local randomness \cite{Steffinlongo2022}, can admit significantly larger violations than previously known.  Moreover, we also address the role of Hilbert space dimension. Specifically, we consider instruments that weakly implement high-dimensional basis measurements and ask whether the large dimensionality implies scalable noise-advantages over projectively simulable instruments. We find that for natural noise forms, scalable advantages are possible, but also that there exist noise forms for which the same is not true.

\section{Preliminaries}
We denote Hilbert spaces by $\mathcal{H}$, the space of density matrices as $\mathcal{D}(\mathcal{H})$ and the space of positive semidefinite linear operators as $\mathcal{L}_+(\mathcal{H})$. A quantum instrument, $\boldsymbol{\mathcal{I}}$, transforms a  state, $\rho\in\mathcal{D}(\mathcal{H}_A)$, into a classical outcome, $a$, and a corresponding post-measurement state, $\omega_a\in\mathcal{D}(\mathcal{H}_{A'})$. It is represented as a set $\boldsymbol{\mathcal{I}}=\{\mathcal{I}_a\}_{a=1}^N$, where each $\mathcal{I}_a: \mathcal{H}_A \rightarrow \mathcal{H}_{A'}$ is a completely positive and trace non-increasing map. The  post-measurement state is given by $\omega_a=\frac{\mathcal{I}_a(\rho)}{\tr\left(\mathcal{I}_a(\rho)\right)}$ where the normalisation is the probability of outcome $a$, namely $p(a)=\tr(\mathcal{I}_a(\rho))$. Normalisation of the probability distribution means that the map $\sum_a \mathcal{I}_a$ is trace-preserving. Using state-channel duality \cite{QMeasurement}, instruments can be represented using Choi operators. Each $\mathcal{I}_a$ is then associated with a bipartite operator $\eta_a\in \mathcal{L}_+(\mathcal{H}_{A'}\otimes \mathcal{H}_A)$ defined as $\eta_a=(\mathcal{I}_a\otimes \openone)[\phi^+]$, where  $\phi^+=\ketbra{\phi^+}$ and $\ket{\phi^+}=\frac{1}{\sqrt{d}}\sum_{i=0}^{d-1}\ket{ii}$ is the maximally entangled state of dimension $d=\text{dim}(\mathcal{H}_A)$. Complete positivity and normalisation are equivalent to $\eta_a\succeq 0$ and $\tr_{A'}\sum_a \eta_a=\frac{\mathds{1}}{d}$ respectively. The (sub-normalised) quantum output of the instrument can be written as $\mathcal{I}_a(\rho)=d \tr_{A}\left((\openone_{A'}\otimes \rho_A^T)\eta_a\right)$. The POVM, $\{M_a\}_a$, that is realised by the instrument  is obtained from its reduced Choi operator as $M_a=d\tr_{A'}(\eta_a)^T$.

Consider now an experimenter that has access only to the following three resources: (i) classical randomness and classical post-processing, (ii) projective measurements on $\mathcal{H}_A$ and (iii) quantum post-processing. Thus, the experimenter  draws a classical variable $\lambda$ from some distribution $\{q_\lambda\}_\lambda$, selects a projective measurement $\{E_{a'|\lambda}\}$ with outcome $a'$ ($E_{a'|\lambda}^2=E_{a'|\lambda}$ and $\sum_{a'} E_{a'|\lambda}=\openone$), and then implements a quantum channel $\Lambda_{a',\lambda}:\mathcal{H}_A\rightarrow \mathcal{H}_{A'}$ on the (subnormalised) post-projection state $E_{a'|\lambda}\rho E_{a'|\lambda}$. A post-processing rule $p(a|a',\lambda)$ is used to determine the final classical output. This is illustrated in Fig~\ref{Fig_scenario}b. Instrument of this type are written as
\begin{equation}\label{PI}
	\mathcal{I}_a(\rho)=\sum_\lambda q_\lambda\ \sum_{a'} p(a|a',\lambda)\ \Lambda_{a',\lambda}\big[E_{a'|\lambda}\rho E_{a'|\lambda}\big],
\end{equation}
We refer to these as \textit{projective instruments} (PIs) and denote their set as $\mathcal{P}$. In Appendix \ref{app:post}, we prove that the post-processing can be discarded w.l.g., i.e.~that one may restrict to $p(a|a',\lambda)=\delta_{a,a'}$, and that one can w.l.g.~also restrict  to $\Lambda_{a',\lambda}=\Lambda_{\lambda}$.

\section{Simulability of instruments} 
How to determine whether an instrument is PI-simulable?  We begin with giving a general necessary condition.  To this end, let us associate every $N$-outcome projective measurement, $\{E_a\}_{a=1}^N$, with a rank-vector $\vec{r}=(r_1,\ldots,r_N)$, where $r_a=\rank(E_a)$. Thus, every $N$-tuple of non-negative integers such that $\sum_a r_a=d$ is a valid rank-vector. We can separately consider the measurements associated with each such $\vec{r}$. Therefore, we write $\lambda=(\chi,\vec{r})$, where $\chi$ is a random variable for selecting projective measurements with rank-vector $\vec{r}$. Consider now the Choi representation of the generic PI defined in Eq.~\eqref{PI} with $p(a|a',\lambda)=\delta_{a,a'}$ and $\Lambda_{a',\lambda}=\Lambda_{\lambda}$. It reads 
\begin{equation}\label{PIchoi}
	\eta_a\!=\sum_{\chi,\vec{r}} q_{\chi,\vec{r}} \left(\Lambda_{\chi,\vec{r}}\otimes\openone\right)\left[\nu_{a|\chi,\vec{r}}\right],
\end{equation}
where $\nu_{a|\chi,\vec{r}}=(E_{a|\chi,\vec{r}}\otimes \openone)\phi^+(E_{a|\chi,\vec{r}}\otimes \openone)$. Observe that (i)  the local projection $E_{a|\chi,\vec{r}}$ of $\phi^+$ succeeds with probability $\frac{r_a}{d}$, and (ii) the operator $\nu_{a|\chi,\vec{r}}$ is locally confined to an $r_a$-dimensional subspace. Consequently, the entanglement dimension (a.k.a~its Schmidt number, SN\footnote{The Schmidt number of a bipartite state $\varphi$ is the smallest integer, $s$, such that $\varphi=\sum_{i} p_i \ketbra{\psi_i}$ with $\rank(\tr_{A}(\ketbra{\psi_i}))\leq s\, \forall i$.} \cite{Terhal2000}) of $\nu_{a|\chi,\vec{r}}$ is at most $r_a$.  This provides the intuition for our first result (see Appendix \ref{app:thm1} for details).

\begin{theorem}[Instrument simulation]\label{thm1}
	Consider any $N$-outcome projective instrument and let  $\{\eta_a\}_{a=1}^N$ be its  Choi representation. There exists a decomposition 
	\begin{align}\label{choi}\nonumber
		&\qquad  \eta_a= \sum_{\vec{r}} \sigma_{a|\vec{r}}, \quad \text{where} \quad \sigma_{a|\vec{r}}\in\mathcal{L}_+(\mathcal{H}_{A'}\otimes\mathcal{H}_A)\\
		&\tr(\sigma_{a|\vec{r}})=q_{\vec{r}}\frac{r_a}{d},\quad  \sum_{a}\tr_{A'}(\sigma_{a|\vec{r}})=q_{\vec{r}}\frac{\openone}{d}, \quad  \text{SN}(\sigma_{a|\vec{r}})\leq r_a.
	\end{align}
	where $\vec{r}$ runs over all rank-vectors and $\text{SN}$ denotes the Schmidt number. 
\end{theorem}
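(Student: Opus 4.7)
The plan is to regroup the terms in the PI Choi representation \eqref{PIchoi} by rank-vector and verify the three constraints on the resulting pieces. Recalling that the hidden variable may be reduced to $\lambda=(\chi,\vec{r})$ with $\chi$ labelling projective measurements at fixed rank-vector $\vec{r}$, the natural ansatz is
\begin{equation}
	\sigma_{a|\vec{r}} := \sum_{\chi} q_{\chi,\vec{r}}\,(\Lambda_{\chi,\vec{r}}\otimes\openone)[\nu_{a|\chi,\vec{r}}],
\end{equation}
which is positive semidefinite as a CP image of a positive operator, and for which $\eta_a=\sum_{\vec{r}}\sigma_{a|\vec{r}}$ holds by construction.

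I would then check the two trace conditions. Since $\Lambda_{\chi,\vec{r}}$ is trace-preserving, a short computation using $E^2=E$ and the standard identity $\tr((M\otimes\openone)\phi^+)=\tr(M)/d$ gives $\tr(\nu_{a|\chi,\vec{r}})=r_a/d$, hence $\tr(\sigma_{a|\vec{r}})=q_{\vec{r}}\,r_a/d$ with $q_{\vec{r}}:=\sum_\chi q_{\chi,\vec{r}}$. For the marginal, trace-preservation of $\Lambda_{\chi,\vec{r}}$ on $A'$ reduces $\tr_{A'}((\Lambda_{\chi,\vec{r}}\otimes\openone)[\nu_{a|\chi,\vec{r}}])$ to the left partial trace of $\nu_{a|\chi,\vec{r}}$, which evaluates to $E_{a|\chi,\vec{r}}^T/d$. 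Completeness $\sum_a E_{a|\chi,\vec{r}}=\openone$ then yields $\sum_a \tr_{A'}(\sigma_{a|\vec{r}})=q_{\vec{r}}\openone/d$.

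The remaining step is the Schmidt-number bound, which is the conceptual heart of the theorem. The vector $(E_{a|\chi,\vec{r}}\otimes\openone)\ket{\phi^+}$ lies in $\mathrm{Image}(E_{a|\chi,\vec{r}})\otimes \mathcal{H}_A$, whose left factor has dimension $r_a$, so the pure state $\nu_{a|\chi,\vec{r}}$ has Schmidt rank at most $r_a$. I would then invoke monotonicity of Schmidt number under one-sided local channels: writing $\Lambda_{\chi,\vec{r}}$ in Kraus form expresses $(\Lambda_{\chi,\vec{r}}\otimes\openone)[\nu_{a|\chi,\vec{r}}]$ as a mixture of pure states $(K_k\otimes\openone)\ket{\nu_{a|\chi,\vec{r}}}$, each still of Schmidt rank $\leq r_a$. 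Convex mixing over $\chi$ preserves this bound, giving $\mathrm{SN}(\sigma_{a|\vec{r}})\leq r_a$.

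I expect the main obstacle to be the Schmidt-number step. It is the structural input that couples the rank-vector constraint to an entanglement classification (the rank of the projector caps the entanglement dimension that can be extracted from $\phi^+$), and it requires the external monotonicity fact. By contrast, the two trace verifications are routine algebra once the ansatz is fixed.
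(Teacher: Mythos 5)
Your proposal is correct and follows essentially the same route as the paper's own proof: the same regrouping ansatz $\sigma_{a|\vec{r}}=\sum_\chi q_{\chi,\vec{r}}(\Lambda_{\chi,\vec{r}}\otimes\openone)[\nu_{a|\chi,\vec{r}}]$, the same trace computations via projectivity and trace-preservation, and the same Schmidt-number argument combining the $r_a$-dimensional local support of the projected $\phi^+$ with monotonicity under one-sided local channels. Your explicit Kraus-decomposition justification of that monotonicity is a welcome detail where the paper simply cites the literature.
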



We can interpret the operators $\{\sigma_{a|\vec{r}}\}_a$ as corresponding to a sub-normalised Choi representation of a PI that uses only measurements with rank-vector $\vec{r}$. The sub-normalisation, $q_{\vec{r}}$, corresponds to the probability of selecting $\vec{r}$ and it is given by $q_{\vec{r}}=\sum_a \tr(\sigma_{a|\vec{r}})$. Thus, Theorem~\ref{thm1} states that if the instrument $\boldsymbol{\mathcal{I}}$ is not a convex combination over the Choi representations associated with the different rank-vectors, then $\boldsymbol{\mathcal{I}}\notin\mathcal{P}$ and hence it is genuinely non-projective.

Theorem~\ref{thm1} is not a sufficient  condition for simulability because not all Choi representations with a Schmidt number as in Eq.~\eqref{choi} can be associated with PIs. Importantly, however, for the practically most relevant case, namely qubits ($d=2$), it turns out to be an exact characterisation.

\begin{theorem}[Qubit instrument simulation]\label{thm2}
	For instruments with qubit input, Theorem~\ref{thm1} is both necessary and sufficient.
\end{theorem}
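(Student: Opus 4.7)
My plan is to prove sufficiency (necessity follows from Theorem~\ref{thm1}). Given a decomposition of the form~\eqref{choi} with $d=2$, I will realise each rank-vector piece separately as a PI and then combine them with the classical weights $q_{\vec r}$. For a qubit input the only possibilities are the \emph{trivial} rank-vectors (a single entry equal to $2$, others zero) and the \emph{$(1,1)$-type} (two entries equal to $1$), so only these need to be handled.

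The trivial case is immediate: only $\sigma_{a^\ast|\vec r}$ is nonzero, and the normalisation conditions in~\eqref{choi} say precisely that $\sigma_{a^\ast|\vec r}/q_{\vec r}$ has unit trace and marginal $\openone/d$, i.e.\ it is a channel Choi matrix. The corresponding PI sub-contribution, used with probability $q_{\vec r}$, takes the identity projector (returning outcome $a^\ast$ deterministically) and applies the channel read off from $\sigma_{a^\ast|\vec r}/q_{\vec r}$.

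The $(1,1)$-type case is the substantive one. The Schmidt-number bound forces each $\sigma_{a_i|\vec r}$ ($i=1,2$) to be separable, which is equivalent to the rescaled sub-instrument $\{\tilde{\mathcal I}_{a_i}\}_{i=1,2}$ being entanglement-breaking; hence it admits a measure-and-prepare form $\tilde{\mathcal I}_{a_i}(\rho)=\sum_k \tr(M_{a_i,k}\rho)\,\omega_{a_i,k}$ for some qubit POVM $\{M_{a_i,k}\}$ (WLOG with rank-1 elements, refining if necessary) and post-states $\omega_{a_i,k}\in\mathcal D(\mathcal H_{A'})$. I would then invoke the qubit-specific fact that every qubit POVM is projectively simulable~\cite{Oszmaniec2017}: $M_{a_i,k}=\sum_\mu q_\mu P_{a_i,k|\mu}$ with $\{P_{a_i,k|\mu}\}_{i,k}$ a projective measurement for each $\mu$. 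For each $\mu$ I build the PI setting in the form of~\eqref{PI} that projects via $\{P_{a_i,k|\mu}\}$, applies the state-replacement channel $\Lambda_{(a_i,k),\mu}[X]=\tr(X)\,\omega_{a_i,k}$ on each projective outcome, and post-processes $(a_i,k)\mapsto a_i$. A direct substitution into~\eqref{PI}, using $P_{a_i,k|\mu}^2=P_{a_i,k|\mu}$, recovers $\tilde{\mathcal I}_{a_i}$ exactly.

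Combining all rank-vector sub-PIs with weights $q_{\vec r}$ yields the desired PI. I expect the main obstacle to be this $(1,1)$ step: sufficiency really rides on qubit POVM simulability, precisely the property that fails for $d\ge 3$ and thus explains why the theorem does not extend beyond qubits. A secondary point requiring care is verifying that the trace and marginal conditions in~\eqref{choi} are respected when the sub-PIs are assembled, which they are by construction since each contribution already carries the matching normalisation.
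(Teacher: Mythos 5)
Your overall architecture (necessity from Theorem~\ref{thm1}; sufficiency by realising each rank-vector block separately and mixing with weights $q_{\vec r}$; the rank-$(2,0,\ldots,0)$ block handled as a deterministic outcome followed by the channel read off from $\sigma_{a^\ast|\vec r}/q_{\vec r}$) matches the paper. The proposal breaks at the $(1,1)$ step, and the break is not a technicality: the fact you invoke, that every qubit POVM is projectively simulable, is false. The qubit SIC-POVM is the standard counterexample, and the paper itself relies on this (its critical visibility for projective simulation is $\sqrt{2/3}<1$, i.e.\ the noiseless four-outcome SIC is \emph{not} projectively simulable). What Ref.~\cite{Oszmaniec2017} actually establishes for qubits is much weaker (e.g.\ that two-outcome POVMs are always simulable). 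In your construction the POVM $\{M_{a_i,k}\}_{i,k}$ obtained by refining the separable decompositions of $\sigma_{a_1|\vec r}$ and $\sigma_{a_2|\vec r}$ is a generic rank-one qubit POVM with as many outcomes as there are terms in those decompositions, so the decomposition $M_{a_i,k}=\sum_\mu q_\mu P_{a_i,k|\mu}$ you need simply does not exist in general, and the $(1,1)$ argument collapses.

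The paper's proof avoids this detour: it writes down directly what a rank-$(1,1)$ PI contribution looks like in Choi form, namely $\tfrac{1}{2}\sum_\chi q_{\chi}\,\xi_{a,\chi}\otimes\varphi^T_{a|\chi}$ with $\{\varphi_{j|\chi},\varphi_{k|\chi}\}$ an orthogonal rank-one pair \emph{shared between the two outcomes with common weights $q_\chi$}, and identifies this family with the separable operators obeying the trace and marginal constraints of Eq.~\eqref{choi}. The substantive content is therefore not separability of each $\sigma_{a|\vec r}$ in isolation but the existence of a paired product decomposition of $\sigma_{a_1|\vec r}$ and $\sigma_{a_2|\vec r}$ in which the local projectors on $\mathcal{H}_A$ are complementary; this pairing is exactly the information your reduction to outcome-by-outcome entanglement-breaking form plus many-outcome POVM simulation throws away. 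Any repair must treat the two outcomes of a given rank-$(1,1)$ block jointly (starting from the two-outcome marginal POVM $\{\tfrac{2}{q_{\vec r}}\tr_{A'}(\sigma_{a_i|\vec r})^T\}_{i}$ and a compatible assignment of output states), rather than invoking a simulability property of general multi-outcome qubit POVMs that does not hold.
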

\begin{proof}
	We give the main idea here and present details in Appendix \ref{app:thm2}. For qubits, there are only two rank-vectors; $\vec{s}=(2,0,\ldots,0)$ and $\vec{t}=(1,1,0,\ldots,0)$, up to permutations. Thus, $\nu_{a|\chi,\vec{s}}=\phi^+$ for $a=1$, yielding arbitrary Choi operators in \eqref{PIchoi}, and $\eta_a=0$ for  $a\neq 1$. This corresponds to \eqref{choi}, since $\text{SN}\leq 2$ trivially holds for any density matrix over $\mathcal{D}(\mathcal{H}_{A'}\otimes \mathbb{C}^2)$. $\vec{t}$ leads to $\nu_{a|\chi,\vec{t}}=\frac{1}{2}E_{a|\chi,\vec{t}}\otimes E_{a|\chi,\vec{t}}^T$ for $a\in\{1,2\}$ and via \eqref{PIchoi} to Choi operators $\frac{1}{2}\sum_{\chi}q_{\chi,\vec{t}} \varphi_{a,\chi,\vec{t}}\otimes  E^T_{a|\chi,\vec{t}}$, for some arbitrary  $\varphi_{a,\chi,\vec{t}}$. Any sub-normalised separable Choi operator can be written on this form and it is  equivalent to the characterisation in \eqref{choi} for $\vec{t}$. 
\end{proof}



Theorem~\ref{thm2} reduces the full characterisation of PIs acting on qubits to a type of separability problem. For qubit-qubit systems, separability is equivalent to positive partial-transpose \cite{Horodecki1996}. This means that for instruments whose input and output are qubits, we can replace in Eq.~\eqref{choi} the only non-trivial Schmidt number constraint ($\text{SN}\leq1$) with  $\sigma_{a|\vec{r}}^{T_A}\succeq 0$. This is crucial since it makes Eq.~\eqref{choi} an SDP. This allows us to efficiently decide the membership problem to $\mathcal{P}$. 

Building on Theorem~\ref{thm1}, SDPs can also be used to falsify simulability for generic instruments. Note that the only condition in \eqref{choi} that is not SDP-compatible is that concerning the Schmidt number. While there exists no computable necessary and sufficient condition for Schmidt number characterisation \cite{Gharibian2010}, many partial criteria are known (see e.g.~\cite{Terhal2000, Sperling2011, Shahandeh2014, Weilenmann2020, Morelli2023, Tavakoli2024b}). Drawing on this literature, we employ such a partial Schmidt number criterion which  admits a semidefinite formulation \cite{Cobucci2024}, as this results in an SDP relaxation of $\mathcal{P}$. A practical choice is based on the reduction map $\Theta_s(X)=\tr(X)\openone-\frac{1}{s}X$. It has the property that  $(\Theta_s\otimes\openone)[\sigma]\succeq 0$ for all $\sigma$ with $\text{SN}\leq s$ \cite{Tomiyama1985, Terhal2000}. Thus, if the SDP is infeasible, no simulation exists. Our implementation is available at \cite{OurCode}.

\begin{figure}[t!]
	\centering
	\includegraphics[width=0.9\columnwidth]{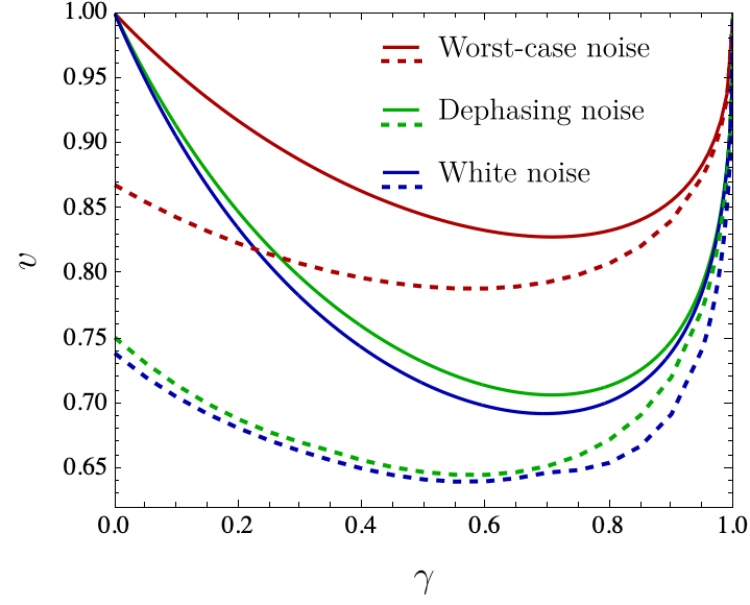}
	\caption{\textbf{Critical visibilities for simulation.} Solid lines are the critical visibilities for PI-simulation of L\"uders instruments for qubit observables with sharpness $\gamma$. Dashed lines are upper bounds on the criticial visibility for L\"uders instruments for qutrit measurements.}\label{weakZ}
\end{figure}

\section{Applications}
We now proceed to demonstrate the practical relevance of these methods via three well-known quantum information applications. These concern, respectively, unsharp measurements, information-disturbance relations and nonlocality.

\subsection{Unsharp observables}\label{secX}
Consider an instrument realising an unsharp Pauli observable with eigenbasis $\{\ket{0},\ket{1}\}$. The associated L\"uders instrument has Kraus operators $K_{a}=\sqrt{\frac{1-(-1)^a \gamma}{2}}\ketbra{0}{0}+\sqrt{\frac{1+(-1)^a \gamma}{2}}\ketbra{1}{1}$, for outcomes $a\in\{1,2\}$ and sharpness parameter $\gamma\in[0,1]$. Thus, $\mathcal{I}_a(\rho)=K_a\rho K_a^\dagger$ and its Choi representation is $\eta_a=(K_a\otimes\openone)[\phi^+](K_a^\dagger\otimes \openone)$. This non-projective qubit instrument is extremal \cite{Pellonpaa2013} and frequently used in  quantum information \cite{Fuchs1996, Silva2015, Mohan2019, Anwer2021}. The associated POVM is $M_a=K_a^\dagger K_a$ and it is always projective simulable, but the instrument is PI only when $\gamma\in\{0,1\}$.

We determine the instrument's  best  approximation by a PI. To this end, we  find the precise amount of noise that these instruments must be exposed to in order to make them PI-simulable for any given $\gamma$. Thus, we consider mixtures
\begin{equation}\label{noise}
	\eta_{a}^v\equiv v\eta_a+(1-v)\eta_a^\text{noise},
\end{equation}
where $\{\eta_a^\text{noise}\}_a$ represents a noise instrument and $v\in[0,1]$ is the visibility. We focus on three standard noise models:  dephasing noise,  white noise and worst-case noise. Respectively, these  correspond  to choosing $\eta_a^\text{noise}=\frac{\ketbra{00}+\ketbra{11}}{4}$, $\eta_a^\text{noise}=\frac{\mathds{1}}{8}$ and optimising over $\eta_a^\text{noise}$. The former corresponds to outputting a random $a$ and then emitting the state $\ket{a}$. The second also outputs a random $a$ but  emits the maximally mixed state. In contrast, since the final noise form makes the simulation  as powerful as possible, we refer to it as worst-case noise.

For all three noise-types, we have determined analytically the critical visibility for PI-simulation (see Appendix \ref{app:weakz}). The optimality of these results is verified by evaluating the SDP. The three critical visibilities are illustrated as  	solid lines in Fig~\ref{weakZ}. We see that sizeable amounts of noise are tolerated before simulation is possible, even in the worst-case setting. This attests to robust non-projective advantages. For dephasing and worst-case noise, the most noise-tolerant instrument corresponds to $\gamma=\frac{1}{\sqrt{2}}$, while for white noise it is somewhat smaller. In particular, the shape of the curves highlights that instruments that are close to sharp measurements ($\gamma=1$) are harder to simulate than instruments that are close to non-interacting measurements ($\gamma=0$).

Complementary to the above, it is interesting to consider instruments associated with extremal POVMs. These POVMs require more than two outcomes since otherwise they are always projective simulable. A prominent example is the qubit symmetric informationally complete POVM \cite{Renes2004}. It reads $\{\frac{1}{2}\ketbra{\varphi_a}\}_{a=1}^4$, corresponding to Bloch vectors  $[(1,1,1),(1,-1,-1),(-1,1,-1),(-1,-1,1)]/\sqrt{3}$. When this POVM is mixed with the random-output POVM, $\{\frac{\mathds{1}}{4}\}_a$, the critical visibility for projective simulation is  $v=\sqrt{2/3}\approx 0.817$ \cite{Oszmaniec2017}. We now consider the associated instrument, defined by Kraus operators $K_a=\frac{1}{\sqrt{2}}\varphi_a$ and mix it with a noise instrument, as in \eqref{noise}. The noise can be represented in more than one way; for instance both dephasing- and white-noise instruments  realise the POVM $\{\frac{\mathds{1}}{4}\}_a$. However, the critical visibility for instrument simulation is not the same. For the former, we find again $v=\sqrt{2/3}$ while for the latter we find $v\approx 0.773$. This showcases the relevance of considering the full measurement process also for extremal POVMs.

\subsection{Information-disturbance relations}
It is well-known that extracting information from a state also induces a disturbance in it \cite{Fuchs1996, Buscemi2006}. However, one may expect that non-projective instruments induce smaller disturbances than PIs. By examining an appropriate information-disturbance trade-off, we show how to check whether an instrument, $\boldsymbol{\mathcal{I}}$, defies PI-simulation.

The experimenter prepares a set of states $\{\psi_x\}$, passes them through $\boldsymbol{\mathcal{I}}$, and measures the quantum output with some POVMs $\{N_{b|y}\}_b$ where $y$ indexes the choice of measurement. The statistics of this experiment becomes $p(a,b|x,y)=\tr(\mathcal{I}_a(\psi_x)N_{b|y})$. To detect  non-projective behaviour, consider a linear witness
\begin{equation}\label{witness}
	W\equiv \sum_{a,b,x,y} c_{abxy} p(a,b|x,y)\leq \beta,
\end{equation}
where $c_{abxy}$ are real coefficients and $\beta$ is a bound satisfied by all PIs. We can  efficiently compute these bounds using our SDP relaxations for $\mathcal{P}$. This admits also further simplification: the linearity of $W$ lets us separately consider the witness for each rank-vector and then select the best result (see Appendix~\ref{AppHemisphere}).  If the experimenter observes a violation of the inequality \eqref{witness}, then no PI-simulation is possible.

\begin{figure}
	\centering
	\includegraphics[width=0.9\columnwidth]{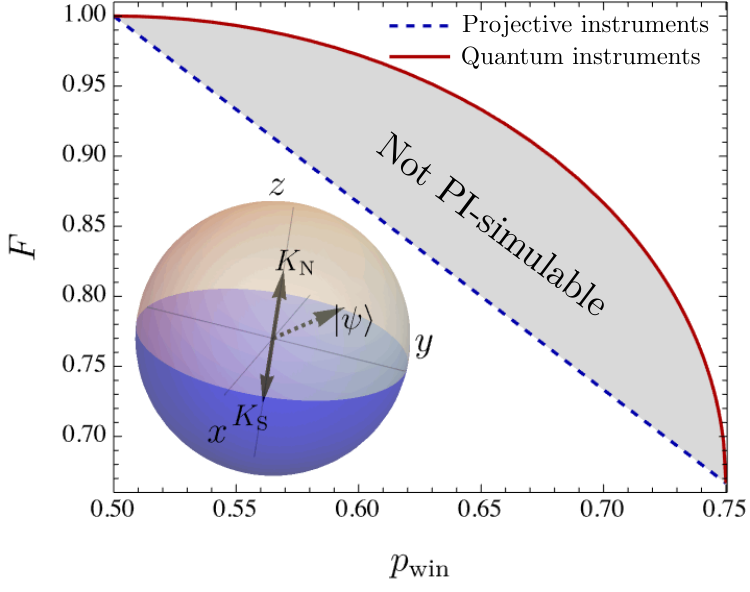}
	\caption{\textbf{Information-disturbance trade-off.} Fidelity between pre- and post-measurement state versus  success probability of hemisphere-discrimination. Inset: illustration of the hemisphere discrimination problem.}
	\label{hemis_dis}
\end{figure}

Let us consider a conceptually motivated example of such a situation, which pertains to all pure qubit states in quantum theory. For clarity, we explicity separate the information extraction part  from the state-disturbance part. Let $\psi$ be an arbitrary pure qubit state. Firstly, we want to determine if it belongs to the northern (N) or southern (S) Bloch-hemisphere. The success probability becomes $p_\text{win}=\frac{1}{2}\int_Nd\psi\tr(\mathcal{I}_\text{N}(\psi))+\frac{1}{2}\int_Sd\psi\tr(\mathcal{I}_\text{S}(\psi))$, where the outcomes are labelled $a\in\{\text{N,S}\}$. Secondly, we want to extract this information while keeping the state as little disturbed as possible. We quantify this through the fidelity of the post-measurement state, which corresponds to measuring $\{N_{b|\psi}\}_b=\{\ketbra{\psi},\openone-\ketbra{\psi}\}$. The average fidelity is $F=\int d\psi \bracket{\psi}{\sum_a \mathcal{I}_a(\psi)}{\psi}$. The optimal trade-off between $F$ and $p_\text{win}$ depends on whether we use generic quantum instruments (Q) or PIs On the one hand, we can select our instrument as the unsharp Pauli-$Z$ observable previously discussed. On the other hand, we can select it as any PI. As shown in Appendix~\ref{AppHemisphere}, the information-disturbance trade-offs become  
\begin{equation}\label{9a}
	 F_{\text{PI}}=\frac{5-4p_\text{win}}{3}, \quad F_Q=\frac{2+\sqrt{16p_\text{win}(1-p_\text{win})-3}}{3},
\end{equation}
where $p_\text{win}\in[1/2,3/4]$ ranges from its trivial value (random guess) to the maximal value possible. We see that non-projective instruments can better preserve the fidelity than can the PIs;  the latters are restricted to a linear trade-off (see Fig~\ref{hemis_dis}). Hence, a better-than-linear trade-off implies that the instrument is genuinely non-projective.

\subsection{Sequential Bell nonlocality} 

There has recently been much interest in sequential violations of Bell inequalities \cite{Cai2025}. In this scenario, Alice and Bob share a two-qubit entangled state $\Psi$ and aim to violate the CHSH inequality, which reads $\mathcal{S}_\text{AB}\equiv \expect{A_0B_0+A_0B_1+A_1B_0-A_1B_1}_\Psi\leq 2$ where $A_x$ and $B_y$ are Alice's and Bob's observables. The average state after Bob's measurement is $\Psi_\text{post}=\frac{1}{2}\sum_{b,y}(\openone\otimes K_{b|y})\Psi(\openone\otimes K_{b|y}^\dagger)$, where $K_{b|y}$ are the Kraus operators of Bob's  $y$'th instrument. Bob's share of $\Psi_\text{post}$ is relayed to Charlie who measures it in another CHSH test with Alice; $\mathcal{S}_\text{AC}\equiv \expect{A_0C_0+A_0C_1+A_1C_0-A_1C_1}_{\Psi_\text{post}}$, where $C_z$ is Charlie's observable. The goal is to achieve a double violation, namely $\mathcal{S}_\text{AB}>2$ and $\mathcal{S}_\text{AC}>2$. Thus, Bob's instrument must measure  strongly enough to violate the inequality, but weakly enough to make a violation possible also for Charlie. Non-projective instruments are natural for this task \cite{Silva2015} but it  was recently found that PIs suffice; they achieve   $\mathcal{S}_\text{AB}=\mathcal{S}_\text{AC}\approx 2.046$ \cite{Steffinlongo2022}.

\begin{figure}
	\centering
	\includegraphics[width=0.9\columnwidth]{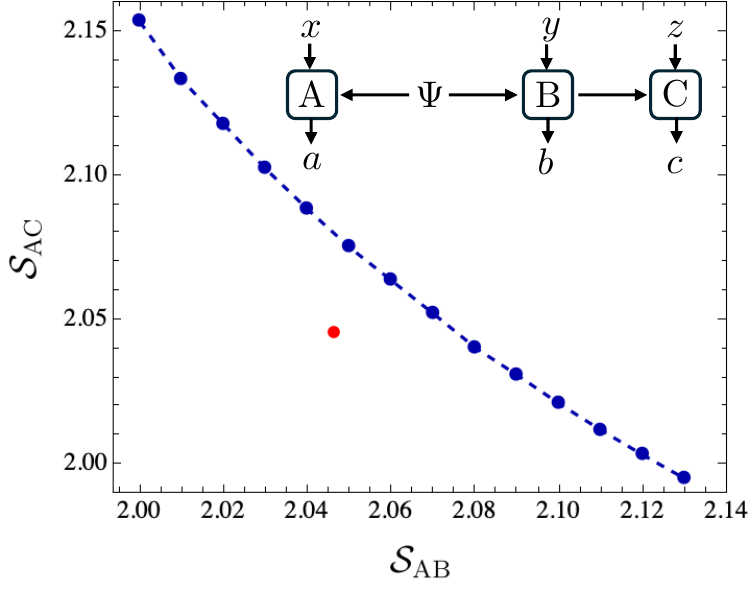}
	\caption{\textbf{Sequential nonlocality.} Numerical trade-off between CHSH parameters $\mathcal{S}_\text{AB}$ and $\mathcal{S}_\text{AC}$ for PIs. Red dot is reported in \cite{Steffinlongo2022}. Inset: an illustration of the sequential CHSH scenario.}
	\label{doublechsh}
\end{figure}

We use our SDP characterisation of qubit PIs to show that larger double-violations are possible than previously known, albeit still weaker than what is possible with non-projective operations. For this, we have employed an alternating convex search routine \cite{Tavakoli2024} in which we iteratively optimise over (i) the qubit assemblage prepared by Alice for Bob, (ii) the PI instruments of Bob, and (iii) the qubit measurements of Charlie. See Appendix \ref{app:SeqBell} for details and \cite{OurCode} for our implementation. This procedure returns an explicit quantum model for the double violations. The results are illustrated in Fig.~\ref{doublechsh} together with the previously known double violation (red dot) from Ref.~\cite{Steffinlongo2022}. We find significantly enhanced violations, and a trade-off between the CHSH parameters that is convex. Notably, we find that the best two-qubit entangled state is not maximally entangled. This is an important reason for improving on Ref.~\cite{Steffinlongo2022}, which restricts its analysis to maximally entangled states. This restriction may appear reasonable since this state indeed is optimal when considering generic quantum instruments \cite{Bowles2020}, but as we have shown it is unjustified for PIs.

\section{High dimensionality}
We now turn to instruments that operate on system of higher-than-qubit dimension and ask how the advantages of non-projectivity depend on the dimension $d$. We focus on L\"uders instruments for an unsharp measurement of the $d$-dimensional computational basis. Its Kraus operators are
\begin{equation}\label{inst}
	K_a=\sqrt{\frac{1+\gamma}{2}}\ketbra{a}+\sqrt{\frac{1-\gamma}{2(d-1)}}\left(\openone-\ketbra{a}\right),
\end{equation}
where $\gamma\in[0,1]$ is the sharpness parameter. The coefficient in front of  $\openone-\ketbra{a}$ is fixed by normalisation.  For $d=2$, these are the instruments considered in section~\ref{secX}.

We begin with the case of qutrits ($d=3$). As in section~\ref{secX}, we consider the mixture of the  instrument with noise of the dephasing, white and worst-case type respectively; see Eq.~\eqref{noise}. Although we do not have a complete characterisation of $\mathcal{P}$ for qutrits, we find that our  SDP relaxation of this set is sufficient to reveal significant noise-advantages over the qubit case. This is shown by the dashed lines in Fig~\ref{weakZ}. These enhanced results motivate us to ask whether the advantage of non-projective instruments over PIs is scalable for large $d$. To answer this, we consider the instrument \eqref{inst} for arbitrary $d$ and study its mixture with noise. We focus on two types of noise, namely dephasing and worst-case noise. Considering both turns out to be important because they yield sharply contrasting results. Let us start with the former.

Let $\eta_a=(K_a\otimes\openone)[\phi^+](K_a^\dagger\otimes \openone)$ be the Choi representation of the instrument \eqref{inst}, let $\eta_a^\text{noise}=\frac{1}{d^2}\sum_{i=0}^{d-1}\ketbra{ii}$ be the completely dephased instrument and consider their mixture, as in Eq.~\eqref{noise}. In Appendix~\ref{AppHighDim} we prove the following upper bound on the critical dephasing-visibility required for a PI-simulation; 
\begin{align}\label{vcrit}
v_\text{deph}(d) \leq   \frac{2(d-1)}{d\left(1+\gamma+\sqrt{(1-\gamma^2)(d-1)}\right)-2}. 
\end{align}
This bound is tight when $\gamma\geq 1-\frac{2}{d}$ and appears to be near-optimal otherwise. Our proof is based on  using  the strong duality theorem of SDP. By considering the dual form of our SDP relaxation of $\mathcal{P}$, we can find upper bounds on $v_\text{deph}(d)$ by constructing feasible points. We have provided explicit such constructions.  Importantly, the critical visibility \eqref{vcrit} scales as $O(\frac{1}{\sqrt{d}})$ for any $\gamma\neq 1$. This  tends to zero for large $d$ and shows that  the non-projective feature of the instruments becomes unboundedly noise-robust for high dimensions.

However, such a favourable scaling is not to be expected from arbitrary types of noise. To show this, consider noise of the worst-case type, i.e.~$\{\eta_a^\text{noise}\}$ can be selected arbitrarily. We consider the following specific choice,
\begin{align}\nonumber
		\eta_a^{\text{worst}} = &x_1\ketbra{aa}{aa}+x_2\sum_{k\neq a} \sum_{l\neq a}\ketbra{kk}{ll}\\& -\sqrt{x_1 x_2} \sum_{j\neq a} (\ketbra{aa}{jj}+h.c),
\end{align}
for some suitable coefficient $x_1$. By normalisation, we have $x_2=(1-dx_1)/(d(d-1))$. In Appendix~\ref{AppWorst} we construct explicit PIs that simulate the corresponding noisy L\"uders instrument \eqref{noise} for any $d$ and $\gamma$. This analytical model precisely matches the curves for $d=2,3$ illustrated in Figure~\ref{weakZ}, suggesting that it may be optimal in general. The weakest simulation is found at $\gamma=1/\sqrt{d}$, at which the model returns
\begin{equation}
v_\text{worst}(d)\geq \frac{1}{2}\left(1+\frac{1}{\sqrt{d}}\right).
\end{equation}
In the limit of large $d$, this converges to $\frac{1}{2}$. Thus, in any dimension and for any sharpness, this type of noise makes impossible a non-projective advantage that is more than $50\%$ noise-tolerance. This shows that particularly detrimental forms of noise can lead to powerful projective simulations even in high-dimensional systems.

\section{Projective simulation of POVMs}
By simply ignoring the quantum output, our approach to projectively simulating quantum instruments reduces to a projective simulation of POVMs. Since this problem has received prior interest in theory \cite{Oszmaniec2017, Guerini2017, Tavakoli2019, Tavakoli2020a} and experiment \cite{Gomez2016, Tavakoli2020, Smania2020, Martinez2023, Wang2023, Feng2023} we make explicit how our methods apply to it as a special case. Let $\{M_a\}$ be a POVM on $\mathcal{H}_A$. In a projective simulation, we aim to decompose it as $M_a=\sum_{\lambda} q_\lambda E_{a|\lambda}$ where $\{E_{a|\lambda}\}$ are projective measurements. Note that post-processing can be neglected without loss of generality \cite{Oszmaniec2017}. As before, we partition the space of projective measurements via their rank-vectors $\vec{r}$. This leads to the following characterisation
\begin{align}\nonumber
&	M_a=\sum_{\vec{r}} F_{a|\vec{r}}, \qquad \text{where} \quad F_{a|\vec{r}}\succeq 0,\\
&	\tr(F_{a|\vec{r}})=q_{\vec{r}}\ r_a, \qquad\qquad \sum_a F_{a|\vec{r}}=q_{\vec{r}}\openone.
\end{align}
where $q_{\vec{r}}=\frac{1}{d}\sum_a \tr(F_{a|\vec{r}})$ is the probability of selecting the rank-vector $\vec{r}$. This can be viewed as an SDP relaxation of the set of projective simulable measurements and it is obtained directly from Theorem~\ref{thm1} by reducing the Choi operators to the space $\mathcal{H}_A$. This gives a computationally simple way to prove the failure of projective measurement simulability.

\section{Conclusion}
We have introduced  projective quantum instruments. This class of instruments is motivated by the operational limitations of an experimenter who attemtps to perform a measurement process using only projective measurements on the incoming system and quantum processing of the post-projection state.  One can view this as a resource theory premiss, in which experiments do not have access to non-projective measurements when realising quantum instruments.

We have developed general and computationally efficient methods for characterising the quantum instruments that admit a projective realisation.  For qubits, we found a complete characterisation which allowed us solve several quantum information tasks. We also showed that  instrument acting on high-dimensional quantum systems can under relevant noise conditions have scalable advantages over projective instruments.  This indicates that dimensionality is a powerful resource for quantum instruments. Interestingly, the analogous has been shown not to hold for POVMs: Ref.~\cite{Kotowski2025} shows that there exists a non-zero  isotropic-noise visibility at which a projective measurement simulation is possible for any POVM in any dimension. This highlights that considering the full measurement process leads to sizably stronger advantages from non-projectivity than considering only the classical outcome of the same process.  A simple intuition for the enhanced scaling encountered for instruments stems from our reported  connection between their simulability and an entanglement classification problem. It is well-known that entanglement under several standard noise-conditions can uphold scalable robustness as the dimension increases. We have seen that this qualitative feature carries over to the advantages of non-projective instruments over projective instruments. While we also showed that non-projective advantages are not always scalable, this is to be expected due to the tailored selection of noise that is maximally detrimental for the considered  instrument. Such noise is typically not physically natural.

\begin{acknowledgments}
We thank Micha{\l} Oszmaniec for feedback. This work is supported by the Knut and Alice Wallenberg Foundation through the Wallenberg Center for Quantum Technology (WACQT) and the Swedish Research Council under Contract No.~2023-03498. S.K. acknowledges support from the Swiss National Science Foundation Grant No.
P500PT\textunderscore222265.
\end{acknowledgments}

\bibliography{bib_instruments}

\onecolumngrid
\appendix

\newpage

\section{ Redundancy of classical post-processing of outcomes}\label{app:post}
In this section, we show that classical post-processing in the projective instrument simulation model can be neglected without loss of generality.  

To this end, we first consider a deterministic post-processing rule. This takes the form $p(a|a',\lambda) = \delta_{a,f(a',\lambda)}$, where $f$ is some function of $a'$ and $\lambda$. In this case, Eq. (1) from the main text can be written as

\begin{align}\label{eq:mod}
	\mathcal I_a(\rho) = \sum_\lambda q_\lambda \sum_{a'} \Lambda_{a',\lambda} \left[ E_{a'\lvert \lambda}\rho E_{a'\lvert \lambda}\right] \delta_{a, f(a',\lambda)} = \sum_\lambda q_\lambda \Gamma_{a\lvert\lambda}^f,
\end{align}where we have defined $\Gamma_{a\lvert\lambda}^f =\sum_{a'} \Lambda_{a',\lambda} \left[ E_{a'\lvert \lambda}\rho E_{a'\lvert \lambda}\right] \delta_{a, f(a',\lambda)} $. We note that the set  $\{\Gamma_{a\lvert\lambda}^f\}_{a} $ defines a quantum instrument for every choice of $f$ and $\lambda$, i.e.~each $\Gamma_{a\lvert\lambda}^f$ is completely positive and the sum is trace-preserving since $\tr\left(\sum_a\Gamma_{a\lvert\lambda}^f (\rho)\right)=1$. Next, we define projective measurements $\tilde E^f_{a\lvert \lambda}\coloneqq\sum_{a'}E_{a'\lvert\lambda}\delta_{a,f(a',\lambda)}$. It can be easily seen that $\sum_a\tilde E^f_{a\lvert\lambda}=\mathds 1$, and that
\begin{equation}
	\begin{aligned}
		\tilde E^f_{a\lvert \lambda}\tilde E^f_{b\lvert \lambda} &= \sum_{a'b'}E_{a'\lvert\lambda}E_{b'|\lambda} \delta_{a,f(a',\lambda)}\delta_{b,f(b',\lambda)} = \sum_{a'b'}\delta_{a',b'}\delta_{a,f(a',\lambda)}\delta_{b,f(b',\lambda)} E_{a'|\lambda} = \sum_{a'} \delta_{a,f(a',\lambda)}\delta_{b,f(a',\lambda)}E_{a'|\lambda} = \delta_{a,b}E_{a|\lambda}.
	\end{aligned}
\end{equation}
We would like $\Gamma^f_{a|\lambda}$ to be expressible as 
\begin{align}
	\Gamma^f_{a|\lambda}(\rho) \stackrel{!}{=} \tilde \Lambda_{\lambda}\left[\tilde E^f_{a|\lambda}\rho\tilde E^f_{a|\lambda} \right] = \tilde \Lambda_{\lambda}\Big[ \sum_{l,l'} \delta_{a,f(l,\lambda)}\delta_{a,f(l',\lambda)}E_{l|\lambda}\rho E_{l'|\lambda}\Big].
\end{align}We therefore select the CPTP maps as  $\tilde \Lambda_{\lambda}[X]\coloneqq \sum_{a'}\Lambda_{a',\lambda}[E_{a'|\lambda}X E_{a'|\lambda}]$. Inserted in the above,
\begin{equation} 
	\begin{aligned}
		\tilde \Lambda_{\lambda}\left[\tilde E^f_{a|\lambda}\rho\tilde E^f_{a|\lambda} \right]=\sum_{a'}\Lambda_{a',\lambda}\Big[ E_{a'|\lambda}\sum_{l,l'} \delta_{a,f(l,\lambda)}\delta_{a,f(l',\lambda)}E_{l|\lambda}E_{l'|\lambda}E_{a'|\lambda}\Big] = \sum_{a'} \Lambda_{a',\lambda}\left[ E_{a'|\lambda}\rho E_{a'|\lambda}\right]\delta_{a,f(a',\lambda)}
	\end{aligned}
\end{equation}where in the last step, we have utilised the orthonormality of the projectors. We have obtained precisely the form in Eq.~\eqref{eq:mod}. Hence, a generic projective instrument with deterministic classical post-processing can be written as 
\begin{align}
	\mathcal I_a(\rho) =  \sum_\lambda q_\lambda \sum_{a'} \Lambda_{a',\lambda} \left[ E_{a'\lvert \lambda}\rho E_{a'\lvert \lambda}\right] \delta_{a, f(a',\lambda)}=  \sum_\lambda q_\lambda\tilde \Lambda_{\lambda}\left[\tilde E^f_{a|\lambda}\rho\tilde E^f_{a|\lambda} \right],
\end{align} where $\tilde \Lambda_{a,\lambda}$ is a CPTP channel for all $a$ and $\lambda$ and $\{\tilde E^f_{a|\lambda}\}_a$ is a projective measurement.

To complete the proof, we can directly extend the argument to stochastic post-processing rules. Any general post-processing strategy $p(a|a',\lambda)$ can be written as a convex combination over deterministic ones, $p(a|a',\lambda) = \sum_f p_f \delta_{a,f(a',\lambda)}$, where $p_f$ is the probability of selecting the deterministic post-processing function $f$. With general post-processing, Eq. (1) in the main text can therefore be written as 
\begin{align}
	\mathcal I_a(\rho) = \sum_\lambda q_\lambda\sum_f p_f \sum_{a'} \Lambda_{a',\lambda}\left[E_{a'|\lambda}\rho E_{a'|\lambda} \right]\delta_{a,f(a',\lambda)} = \sum_\lambda q_\lambda \sum_f p_f \ \tilde \Lambda_{\lambda}\left[\tilde E^f_{a|\lambda}\rho\tilde E^f_{a|\lambda} \right].
\end{align}
We can now simply re-define $\lambda \rightarrow (\lambda,f)$. Its new  distribution becomes $q_\lambda\rightarrow q_\lambda p_f$. We have then obtained a projective instrument without classical post-processing. Moreover, the quantum post-processing channel no longer needs to depend on the classical outcome, which can be absorbed into the channel itself.

\section{Proof of Theorem 1}\label{app:thm1}

\textbf{Theorem 1} (Instrument simulation).
Consider any $N$-outcome projective instrument and let  $\{\eta_a\}_{a=1}^N$ be its  Choi representation. There exists a decomposition 
\begin{align}\label{choiA}\nonumber
	&\qquad  \eta_a= \sum_{\vec{r}} \sigma_{a|\vec{r}}, \qquad \text{where} \qquad \sigma_{a|\vec{r}}\in\mathcal{L}_+(\mathcal{H}_{A'}\otimes\mathcal{H}_A)\\
	&\tr(\sigma_{a|\vec{r}})=q_{\vec{r}}\frac{r_a}{d},\qquad  \sum_{a}\tr_{A'}(\sigma_{a|\vec{r}})=q_{\vec{r}}\frac{\openone}{d}, \qquad  \text{SN}(\sigma_{a|\vec{r}})\leq r_a.
\end{align}
where $\vec{r}$ runs over all rank-vectors and $\text{SN}$ denotes the Schmidt number.

\textit{Proof.-- }Consider a $N$-outcome projective instrument $\boldsymbol{\mathcal I}$ operating from a Hilbert space of dimension $d=\text{dim}(\mathcal{H}_A)$. For outcome $a$, the post-measurement state is by definition given by
\begin{equation}\label{PIapp}
	\mathcal{I}_a(\rho)=\sum_\lambda q_\lambda  \Lambda_{\lambda}\left[E_{a|\lambda}\rho E_{a|\lambda}\right],
\end{equation}
where $\{E_{a|\lambda}\}_a$ are projective measurements, $\{\Lambda_{\lambda}\}$ are CPTP maps and $\{q_\lambda\}$ is a probability distribution. 
As in the main text, we associate every $N$-outcome projective measurement, $\mathbf{E}=\{E_a\}_{a=1}^N$, with a rank-vector $\vec{r}=(r_1,\ldots,r_N)$, where $r_a=\rank(E_a)$. Thus, every $N$-tuple of non-negative integers such that $\sum_a r_a=d$ is a valid rank-vector. Therefore, we write $\lambda=(\chi,\vec{r})$, where $\chi$ is a random variable for selecting projective measurements with rank-vector $\vec{r}$. We can thus write Eq. \eqref{PIapp} equivalently as 
\begin{equation}
	\begin{aligned}
		\mathcal{I}_a(\rho) =\sum_{\chi,\vec r} q_{\chi,\vec r}  \Lambda_{\chi,\vec r}\left[E_{a|\chi,\vec r}\,\rho E_{a|\chi,\vec r}\right]
		\equiv \sum_{\chi,\vec r} q_{\chi,\vec r} \,\mathcal S_{a\lvert \chi,\vec r}[\rho], 
	\end{aligned}
\end{equation}
where $\mathcal S$ is a completely positive trace-non-increasing map. We now use channel-state duality to represent these maps as bipartite operators. We associate the Choi operator $\eta_a$ to the instrument $\mathcal I_a$ and the Choi operator $\tilde \sigma_{a\lvert \chi,\vec r}$ to $\mathcal S_{a\lvert \chi,\vec r}$. Therefore, we have 
\begin{align}
	\eta_a = \sum_{\chi,\vec r} q_{\chi,\vec r} \left(\mathcal S_{a\lvert \chi,\vec r}\otimes \mathds 1\right)[\phi^+] = \sum_{\chi,\vec r} q_{\chi,\vec r}\, \tilde  \sigma_{a\lvert\chi,\vec r} =\sum_{\vec{r}}\sigma_{a|\vec{r}},
\end{align}
where we have defined $\sigma _{a\lvert \vec r } \coloneqq \sum_{\chi} q_{\chi,\vec r}\, \tilde \sigma_{a\lvert \chi,\vec r}$. Note that $\sigma_{a\lvert \vec r} \in \mathcal{L}_+(\mathcal{H}_{A'}\otimes\mathcal{H}_A)$ because of the complete positivity of the involved maps.  This corresponds to the first constraint in \eqref{choiA}.

Next, we have 
\begin{equation}
	\begin{aligned}
		\tr(\sigma_{a\lvert \vec r}) &= \sum_\chi q_{\chi,\vec r} \tr\left(\tilde \sigma_{a\lvert \chi,\vec r}\right) = \sum_\chi q_{\chi,\vec r}\tr\left(\mathcal S_{a\lvert \chi,\vec r}\otimes \mathds 1[\phi^+] \right) 
		=\frac{1}{d} \sum_\chi q_{\chi,\vec r}\sum_{i,j}\tr\left( \Lambda_{\chi,\vec r}\left[E_{a|\chi,\vec r}\,\ketbra{i}{j} E_{a|\chi,\vec r}\right] \otimes \ketbra{i}{j}\right) \\ 
		& = \frac{1}{d} \sum_\chi q_{\chi,\vec r}\tr\bigg(\Lambda_{\chi,\vec r}\big[E_{a|\chi,\vec r}\,\bigg(\sum_{i}\ketbra{i}{i}\bigg) E_{a|\chi,\vec r}\big]\bigg) = \frac{1}{d} \sum_\chi q_{\chi,\vec r}\tr\left(\Lambda_{\chi,\vec r}\big[E_{a|\chi,\vec r}\big]\right)\\
		&=\frac{1}{d} \sum_\chi q_{\chi,\vec r}\tr\left(E_{a|\chi,\vec r}\right)=q_{\vec{r}}\frac{r_a}{d},
	\end{aligned}
\end{equation}
where in the third-last step  we have first used completeness ($\openone=\sum_i \ketbra{i}{i}$) and then that $\{E_{a|\chi,\vec{r}}\}_a$ is a projective measurement. In the penultimate step we have used that  $\Lambda_{\chi,\vec r}$ is trace-preserving. In the last step, we have used that $\rank(E_{a|\chi,\vec{r}})=\tr(E_{a|\chi,\vec{r}})=r_a$ and we define the probability distribution $q_{\vec r}\coloneqq \sum_\chi q_{\chi,\vec r }$. This corresponds to the first constraint on the second row of \eqref{choiA}.

Next, we have 
\begin{equation}
	\begin{aligned}
		\sum_{a}\tr_{A'}(\sigma_{a|\vec{r}}) &= \sum_{a,\chi} q_{\chi,\vec r} \tr_{A'}\left(\tilde \sigma_{a\lvert \chi,\vec r}\right)  =  \frac{1}{d} \sum_{\chi} q_{\chi,\vec r}\sum_{i,j}\sum_a\tr\bigg( \Lambda_{\chi,\vec r}\left[E_{a|\chi,\vec r}\,\ketbra{i}{j} E_{a|\chi,\vec r}\right] \bigg)\otimes \ketbra{i}{j}.
	\end{aligned}
\end{equation} 
Using first the trace-preservation of $\Lambda_{\chi,\vec{r}}$, then the projectivity of $E_{a|\chi,\vec{r}}$ and finally the completeness of the measurement:
\begin{equation}
	\sum_a\tr\left( \Lambda_{\chi,\vec r}\left[E_{a|\chi,\vec r}\,\ketbra{i}{j} E_{a|\chi,\vec r}\right] \right)=\sum_a\tr\left( E_{a|\chi,\vec r}\,\ketbra{i}{j} E_{a|\chi,\vec r} \right)=\tr\left( \ketbra{i}{j} \sum_aE_{a|\chi,\vec r} \right)=\delta_{i,j}.
\end{equation}
Hence,
\begin{equation}
	\sum_{a}\tr_{A'}(\sigma_{a|\vec{r}})=\frac{1}{d}\sum_{\chi} q_{\chi,\vec{r}} \sum_{i,j}\delta_{i,j}\ketbra{i}{j}=q_{\vec{r}}\frac{\openone}{d}.
\end{equation}
This is the second constraint in the second row of \eqref{choiA}.

Lastly, the local projection of the maximally entangled state $\phi^+$ onto $E_{a\lvert \chi,\vec r}$ confines it to a $ r_a$-dimensional subspace, i.e.,the sub-normalised state $(E_{a\lvert \chi,\vec r}\otimes \mathds 1)\phi^+(E_{a\lvert \chi,\vec r}\otimes \mathds 1)$ lives on a Hilbert space isomorphic to $\mathcal{L}_+(\mathbb{C}^{r_a}\otimes \mathbb{C}^d)$. Such a state trivially has Schmidt number no larger than its smallest local dimension, namely $r_a$. Next, we are allowed to stochastically apply CPTP maps, $\Lambda_{\chi,\vec r}$,  on one of the systems. However, these are local operations without post-selection; such operations cannot increase the Schmidt number \cite{Terhal2000}. Hence we conclude that $\text{SN}(\sigma_{a\lvert\vec r})\leq  r_a$. This is the third constraint on the second row of \eqref{choiA}.\qed

\section{Proof of Theorem 2}\label{app:thm2}

\textbf{Theorem 2} (Qubit instrument simulation)
\textit{For instruments with qubit input, Theorem 1 is both necessary and sufficient.} \\

\textit{Proof.--} Consider and $N$-outcome qubit instrument with Choi operators $\eta_a$. Theorem 1 is necessary for the instrument to be a PI. Here, we prove that it is also sufficient. 

Suppose there exist $\sigma_{a\lvert \vec r}$ such that the conditions \eqref{choiA} are satisfied. Qubit projective measurements only have two qualitatively different types of rank-vectors. These are $\vec{s}=(2,0,\ldots,0)$ and $\vec{t}=(1,1,0,\ldots,0)$, up to permutations. We denote the permutation of $\vec s$ with "2" as the $i$-th element as $\vec s_i$. Likewise, we denote the permutation of $\vec t$ with "1" as the $j$-th  and  $k$-th elements ($j\neq k$) as $\vec t_{jk}$. Furthermore, for rank vector $\vec s_i$, outcome $a=i$ corresponds to the rank-2 projector $E_{i\lvert \vec s_i}=\mathds 1$, while other outcomes correspond to null matrices. On the other hand, for the rank vector $\vec t_{jk}$, outcomes $a=j,k$ are both relevant and correspond to orthogonal rank-1 projectors $E_{a\lvert \chi,\vec t_{jk}} = \varphi_{a\lvert\chi,\vec t_{jk}}$ (such that $\varphi_{j\lvert\chi,\vec t_{jk}} = \mathds 1-\varphi_{k\lvert\chi,\vec t_{jk}} $), while other outcomes correspond to null matrices. Note that projective simulability of the instrument would imply
\begin{equation}
	\begin{aligned}
		\eta_a &= \sum_{i=1}^{N} q_{\vec s_i}(\Lambda_{\vec s_i}\otimes \mathds 1) [E_{a\lvert \vec s_i}\otimes \openone\,\phi^+E_{a\lvert \vec s_i}\otimes \openone] +\sum_{j<k} \sum_{\chi} q_{\chi,\vec t_{jk}} (\Lambda_{\chi,\vec t_{jk}}\otimes \mathds 1)[E_{a\lvert \chi,\vec t_{jk}}\otimes \openone\,\phi^+E_{a\lvert \chi,\vec t_{jk}}\otimes \openone]  \\
		&=  q_{\vec s_a}(\Lambda_{\vec s_a}\otimes \mathds 1) [\phi^+] + \sum_{j<k}\sum_{\chi} q_{\chi,\vec t_{jk}}\,\frac{1}{2}\Lambda_{\chi,\vec t_{jk}}[ \varphi_{a\lvert\chi,\vec t_{jk}}]\otimes\varphi^T_{a\lvert\chi,\vec t_{jk}},
	\end{aligned}
\end{equation}where in  we have used the specific form of the projectors and the identity $(E\otimes\openone)\phi^+ (E\otimes\openone) = (E\otimes E^T)/2$ for any projector $E$. The first term in the above sum (second row) corresponds to a generic form of a sub-normalised Choi state, which corresponds to the constraints in \eqref{choiA} for $\sigma_{a|\vec{s}_a}$; note that  $\text{SN}\leq 2$ trivially holds for all $\sigma_{a|\vec{r}}\in\mathcal{L}_+\left(\mathcal{H}_{A'}\otimes \mathbb{C}^2\right)$. Consider now the second term. For PIs, $\Lambda_{\chi,\vec{t}_{jk}}$ is an arbitrary CPTP map. We can therefore, whenever $a\in\{j,k\}$, generate any desired state $\xi_{a,\chi,\vec t_{jk}}=\Lambda_{\chi,\vec{t}_{jk}}[\varphi_{a\lvert\chi,\vec t_{jk}}]$. The second term then becomes
\begin{align}
	\sum_{j<k} \sum_{\chi} q_{\chi,\vec t_{jk}} \frac{1}{2}\xi_{a,\chi,\vec t_{jk}}\otimes\varphi^T_{a\lvert\chi,\vec t_{jk}}=\sum_{j<k}
	\begin{cases}
		T_{a,jk}& a\in\{j,k\}\\
		0 & \text{otherwise}
	\end{cases},
\end{align}
where $T_{a,jk}=\frac{1}{2}\sum_{\chi} q_{\chi,\vec t_{jk}}\xi_{a,\chi,\vec t_{jk}}\otimes\varphi^T_{a\lvert\chi,\vec t_{jk}}$ is a separable operator with  $\tr(T_{a,jk})=\frac{1}{2}\sum_\chi q_{\chi,\vec t_{jk}}=\frac{q_{\vec t_{jk}}}{2}$. Thus, each $T_{a,jk}$ becomes becomes the generic form of a separable (SN $=1$)  trace-half (up to sub-normalisation $q_{\vec t_{jk}}$) operator. This  corresponds precisely to the conditions \eqref{choiA} for $\sigma_{a|\vec{t}_{jk}}$. Hence, the conditions \eqref{choiA} imply projective simulability for qubit instruments.
\qed

\section{Simulation of noisy unsharp Pauli measurement process}\label{app:weakz}
Consider the L\"uders instrument corresponding to an unsharp Pauli observable. It has Kraus operators
\begin{equation}
	K_{a}=\sqrt{\frac{1-(-1)^a \gamma}{2}}\ketbra{0}{0}+\sqrt{\frac{1+(-1)^a \gamma}{2}}\ketbra{1}{1},
\end{equation}
for outcomes $a\in\{1,2\}$ and sharpness parameter $\gamma\in[0,1]$. The Choi representation of this instrument reads $\eta_a=(K_a\otimes \openone)\phi^+ (K_a^\dagger\otimes\openone)$. We consider now the instrument that corresponds to a mixture of the above extremal instrument with a noise instrument $\{\eta_a^\text{noise}\}_a$. The mixture becomes
\begin{equation}\label{noiseX}
	\eta_a^v=v\eta_a+(1-v)\eta_a^\text{noise},
\end{equation} 
for some visibility $v\in[0,1]$. We are interested in determining the critical visibility for PI-simulation, i.e.~we seek the largest $v$ such that $\{\eta_a^v\}\in\mathcal{P}$. For PIs, following the necessary and sufficient SDP characterisation is obtained from Theorem 1 and Eq.~\eqref{choiA},
\begin{align}\nonumber\label{critvisi}
	\max &\quad v\\\nonumber
	\quad& \text{such that} \quad  \eta_a^v=\sigma_{a|(1,1)}+\sigma_{a|(2,0)}+\sigma_{a|(0,2)}, \\ \nonumber
	\quad & \tr(\sigma_{1|(1,1)})=\tr(\sigma_{2|(1,1)})=\frac{q_{(1,1)}}{2}, \qquad \tr(\sigma_{1|(2,0)})=q_{(2,0)}, \qquad  \tr(\sigma_{2|(0,2)})=q_{(0,2)}\\\nonumber
	\quad & \sum_a \tr_{A'}(\sigma_{a|(1,1)})=q_{(1,1)}\frac{\openone}{2}, \qquad\tr_{A'}(\sigma_{1|(2,0)})=q_{(2,0)}\frac{\openone}{2}, \qquad  \tr_{A'}(\sigma_{2|(0,2)})=q_{(0,2)}\frac{\openone}{2}\\\nonumber
	\quad & \sigma_{1|(1,1)}^{T_{A}}\succeq 0, \qquad \sigma_{2|(1,1)}^{T_{A}}\succeq 0, \qquad \sigma_{2|(2,0)}=\sigma_{1|(0,2)}=0,\\
	\quad & \sigma_{1|(1,1)},\sigma_{2|(1,1)},\sigma_{1|(2,0)},\sigma_{2|(0,2)}\in\mathcal{L}_+\left(\mathbb{C}_{A'}^2\otimes \mathbb{C}_{A}^2\right).
\end{align}
We will study this problem for three prominent noise models, namely
\begin{align}
	& \text{dephasing noise:} && \eta_a^\text{noise}=\frac{1}{4}\left(\ketbra{00}+\ketbra{11}\right),\\
	& \text{worst-case noise:} && \{\eta_a^\text{noise}\}_a \text{ is an arbitrary instrument}\\
	& \text{white noise:} && \eta_a^\text{noise}=\frac{1}{8}\openone.
\end{align}

Below, we address the three noise models one by one and analytically derive the critical visibility for PI-simulation. All results are optimal, i.e.~they are the exact solution of \eqref{critvisi}, since they have been matched up to solver precision via the SDP characterisation \eqref{critvisi}.

\subsection{Dephasing noise}
Select the variables as $\sigma_{a|(1,1)}=\frac{q_{(1,1)}}{2}\ketbra{aa}$ and $\sigma_{1|(2,0)}=\sigma_{2|(0,2)}=\frac{1-q_{(1,1)}}{2} \phi^+$. The last four rows of constraints in Eq.~\eqref{critvisi} are satisfied by construction. Solving for the first constraint, namely the simulation equation  in \eqref{noiseX}, we obtain $q_{(1,1)}=\gamma v_{\text{deph}}$ and
\begin{equation}
	v_\text{deph}=\frac{1}{\gamma+\sqrt{1-\gamma^2}}.
\end{equation}

\subsection{Worst-case noise}

We use the simulation strategy and $\sigma_{a\lvert(1,1)}=\frac{q_{(1,1)}}{2}\ketbra{aa}{aa}$ and $\sigma_{1\lvert (2,0)} =\sigma_{2\lvert (0,2)}=\frac{1-q_{(1,1)}}{2}\phi^+$. We must also choose the noise instrument optimally. We select
\begin{align}
	\eta_1^{\text{noise}} = \begin{pmatrix}
		x_1 &0 &0 & -\sqrt{x_1x_2}\\ 0&0&0&0\\0&0&0&0\\ -\sqrt{x_1x_2} &0&0&x_2
	\end{pmatrix}, \quad \eta_2^{\text{noise}} = \begin{pmatrix}
		x_2 &0 &0 & -\sqrt{x_1x_2}\\ 0&0&0&0\\0&0&0&0\\ -\sqrt{x_1x_2} &0&0&x_1
	\end{pmatrix}.
\end{align}Using Eq. \eqref{critvisi} and the constraints \eqref{noiseX}, we obtain
\begin{equation}
	\begin{aligned}
		q_{(1,1)} = \frac{1}{2}\left(\left(\gamma -\sqrt{1-\gamma ^2}\right) v+1\right) ,\quad x_1 = \frac{\left(3-\left(\sqrt{1-\gamma ^2}+\gamma +2\right) v\right),}{8(1-v)} \quad  x_2 = \frac{\left(\left(\sqrt{1-\gamma ^2}+\gamma -2\right) v+1\right)}{8(1-v)},
	\end{aligned}
\end{equation}and the critical visibility
\begin{align}
	v_{\text{worst}} = \frac{1}{-\sqrt{1-\gamma ^2}+2 \sqrt{(\gamma -1) \left(\sqrt{1-\gamma ^2}-1\right)}-\gamma +2}.
\end{align}

\subsection{White noise}
Next, we consider the critical visibility for PI-simulation under white noise. Consider the following Choi operators,
\begin{align}
	&\sigma_{1\lvert(1,1)} = 	\begin{pmatrix}
		y_1 & 0&0&\sqrt{y_2 y_3}\\
		0& y_2 & 0&0\\
		0&0&y_3&0\\
		\sqrt{y_2 y_3}&0&0&y_4
	\end{pmatrix},
	\quad
	&&\sigma_{2\lvert(1,1)}	=\begin{pmatrix}
		y_4 & 0&0&\sqrt{y_2 y_3}\\
		0& y_3 & 0&0\vspace{5mm}\\
		0&0&y_2&0\\
		\sqrt{y_2 y_3}&0&0&y_1
	\end{pmatrix},\\
	&\sigma_{1\lvert(2,0)} = 	\begin{pmatrix}
		z_1 & 0&0&\sqrt{z_1 z_3}\\
		0& z_2 & 0&0\\
		0&0&0&0\\
		\sqrt{z_1 z_3}&0&0&z_3
	\end{pmatrix},
	\quad 
	&&\sigma_{2\lvert(0,2)} = 	\begin{pmatrix}
		z_3 & 0&0&\sqrt{z_1 z_3}\\
		0& 0 & 0&0\\
		0&0&z_2&0\\
		\sqrt{z_1 z_3}&0&0&z_1
	\end{pmatrix},
\end{align}
where $y_i,z_i\geq 0$. The operators  $\sigma_{a\lvert (1,1)}$ are separable by construction. The operators  $\sigma_{1\lvert (2,0)}$ and $\sigma_{2\lvert (0,2)}$ are entangled by construction and their form is obtained by ensuring that they are  positive semidefinite. Imposing the appropriate sub-system reduction constraints from \eqref{critvisi} and solving the first constraint in \eqref{critvisi} gives a complete solution in terms of all the parameters $\{\{y_i,z_i\}_i,v\}$. Almost all of them admit a closed form, except two. The critical visibility $v_{\text{white}}$ is the largest solution of the following eighth-degree equation,

\begin{equation}
	\begin{aligned}
		&\left(16384 \gamma ^8-12288 \gamma ^7-17408 \gamma ^6+18496 \gamma ^5+448 \gamma ^4-4624 \gamma ^3+1768 \gamma ^2-375 \gamma \right) v^8\\&+\left(-4096 \gamma ^7-14336 \gamma ^6-1728 \gamma ^5+20160 \gamma ^4-1392 \gamma ^3-9280 \gamma ^2+3065 \gamma -625\right) v^7\\& +\left(-1024 \gamma ^6+10432 \gamma ^5+832 \gamma ^4-992 \gamma ^3+3176 \gamma ^2-4787 \gamma +1575\right) v^6 \\ &+\left(5568 \gamma ^5-3520 \gamma ^4-8480 \gamma ^3+3328 \gamma ^2+1645 \gamma -1061\right) v^5+\left(-1536 \gamma ^4-1552 \gamma ^3+888 \gamma ^2+795 \gamma -29\right) v^4\\ &+\left(656 \gamma ^3+64 \gamma ^2-437 \gamma +77\right) v^3 +\left(56 \gamma ^2+79 \gamma +53\right) v^2+(15 \gamma +9) v+1=0.
	\end{aligned}
\end{equation}
$y_2$ is the real part of either of the roots of the following equation, i.e., $y_2 = \text{Re}(x)$,
\begin{multline}
	\sqrt{64 x^2-v \left(32 \sqrt{2} \sqrt{x\left(\gamma ^2-1\right) (v-1)}+16 \gamma ^2 v-17 v+2\right)-16 x (v-1)+1}\\
	+\sqrt{64 x^2+16x (4 \gamma  v+1-v)	+(4 \gamma  v+v-1)^2} = 2(1+v) 
\end{multline}

The other parameters are 
\begin{align}
	y_1 = \frac{1}{16} \left(\xi+8 y_2+4 \gamma  v+3 v+1\right), \,\, y_4 = \frac{1}{16} \left(\xi-8 y_2-4 \gamma  v+v+3\right),\,\,	z_1 = \frac{1}{16} \left(\xi-8 y_2-v+1\right),
\end{align} $	y_3 = (1 - v)/8$, $z_2= (1 - 8 y_2 - v)/8$ and  $z_3=1/2-(y_1+y_2+y_3+y_4+z_1+z_2)$, where we have defined
\begin{align}
	\xi \coloneqq \sqrt{64 y_2^2-v \left(32 \sqrt{2} \sqrt{y_2\left(1-\gamma ^2\right)\left( 1-v\right)} +16\gamma^2v  - 17v+2\right)-16y_2 (v-1)+1}.
\end{align}

\section{Information-disturbance tests}\label{AppHemisphere}

\subsection{Computing general witnesses}
Our SDP characterisation allows us to compute arbitrary witnesses of instrument non-projectivity. The idea is as presented in the main text. The experimenter prepares a set of states $\{\psi_x\}$, passes them through a given instrument, $\boldsymbol{\mathcal{I}}$, to be tested for non-projectivity.  The experimenter then measures the quantum output with some POVMs $\{N_{b|y}\}_b$ where $y$ indexes the choice of measurement. The statistics of this experiment are given by the conditional probabilities $p(a,b|x,y)=\tr(\mathcal{I}_a(\psi_x)N_{b|y})$. To detect  non-projective behaviour, consider a linear witness
\begin{equation}
	W\equiv \sum_{a,b,x,y} c_{abxy} p(a,b|x,y)\leq \beta,
\end{equation}
where $c_{abxy}$ are real coefficients and $\beta$ is a bound satisfied by all PIs. In the Choi representation, $W$ can be expressed as
\begin{align}
	W=\sum_{a,b,x,y}c_{abxy}\,d\tr\left( \left( N_{b\lvert y}\otimes\psi_x^T\right)\eta_a\right),
\end{align}
where $\eta_a$ is the Choi operator corresponding to $\mathcal I_a$ and $d=\text{dim}(\mathcal H_A)$. For a PI, $W$ becomes a convex combination over its deterministic values, associated with a specific choice of the classical variable $\lambda=(\chi,\vec{r})$. Therefore, due to this linearity, the optimal witness value corresponds to a specific choice of $(\chi,\vec{r})$. Therefore, let us denote by $\mathcal{P}_{\vec{r}}$ the set of PIs in which only projective measurements with rank-vector $\vec{r}$ are employed. We need only to bound $W$ for each rank-vector individually and select the largest value. Thus, we are interested in computing  
\begin{equation}
	\beta_{\vec{r}}=\max_{\boldsymbol{\mathcal{I}}\in\mathcal{P}_{\vec{r}}} W
\end{equation}
and then set the bound to
\begin{equation}
	\beta=\max_{\vec{r}} \beta_{\vec{r}}.
\end{equation}

Using the SDP relaxation of $\mathcal{P}$  we can efficiently compute upper bounds $\beta_{\vec{r}}^\uparrow$ bounds on each $\beta_{\vec{r}}$. For PIs, the program becomes
\begin{equation}\label{prog}
	\begin{aligned}
		\beta^\uparrow_{\vec{r}} =\max_{\{\sigma_a\}}\,\, &\sum_{a,b,x,y}c_{abxy}\,d\tr\left( \left( N_{b\lvert y}\otimes\psi_x^T\right)\sigma_a\right)\\
		&\text{such that}\quad \sigma_{a}\in\mathcal{L}_+(\mathcal{H}_{A'}\otimes\mathcal{H}_A), \quad \tr(\sigma_{a})=\frac{r_a}{d},\quad  \sum_{a}\tr_{A'}(\sigma_a)=\frac{\openone}{d}, \quad  (\Theta_{r_a}\otimes\openone)[\sigma_{a}]\succeq 0
	\end{aligned}
\end{equation}
In the SDP, we have used the generalised reduction map $\Theta_s(X)=\tr(X)\openone-\frac{1}{s}X$ to relax the Schmidt number condition in Theorem 1. In the case of qubits, only  $r_a=1$ implies a non-trivial Schmidt number. We then replace the final constraint in \eqref{prog}  with positive partial transpose. This ensures that $\beta^\uparrow_{\vec{r}}=\beta_{\vec{r}}$ for qubits.

\subsection{Bloch-hemisphere discrimination }

Consider that we randomly and uniformly draw a pure qubit state $\ket{\psi}$. To determine whether it belongs to the northern (N) or the southern (S) Bloch-hemisphere, we utilise an arbitrary two-outcome qubit instrument, $\{\mathcal{I}_a\}$ with outcomes $a\in\{\text{N,S}\}$. The success probability is given by
\begin{equation}
	\begin{aligned}
		p_{\text{win}} = \frac{1}{2} \int_{\text N} d\psi \tr\left(\mathcal{I}_\text{N}\left(\psi \right)\right) + \frac{1}{2} \int_{\text S} d\psi \tr\left(\mathcal{I}_\text{S}\left(\psi \right)\right) 
	\end{aligned} 
\end{equation}
In the Choi representation, $\mathcal{I}_a(\psi) = 2\tr_A\left(\openone_{A'}\otimes \psi_A^T  \eta_a \right)$ for a bipartite operator $\eta_a\in\mathcal{L}_+(\mathbb{C}^2_{A'}\otimes\mathbb{C}^2_A)$. We then have,
\begin{align}
	p_{\text{win}} = \int_{\text{N}} d\psi \tr\left( \psi^T\eta_\text{N}^{A}\right) + \int_{\text S} d\psi \tr\left( \psi^T\left(  \frac{\mathds1}{2}-\eta_\text{N}^{A}\right)\right), 
\end{align}
where we have further utilised $\tr_{A'}\eta_N+\tr_{A'}\eta_S = \mathds 1/2$.  By noting that the the two hemispheres are connected by a reflection along the Z axis, we can write, 
\begin{align}
	p_{\text{win}} = \int_{\text S} d\psi \frac{1}{2}+ \int_{\text N} d\psi  \tr\left(\left( \psi^T - \left(\sigma_x\psi^T \sigma_x\right)^*\right)\eta_N^A \right). 
\end{align}We now explicitly evaluate the following integral over the Haar measure, $d\psi = \sin\theta\, d\theta \,d\phi/2\pi$,
\begin{equation}
	\begin{aligned}
		\int_{\text{N}}d\psi \psi^T = \frac{1}{2\pi}\int_0^{2\pi}d\phi  \int_{0}^{\pi/2}d\theta \sin\theta \ketbra{\psi(\theta)}{\psi(\theta)}  = \frac{1}{2}\ketbra{0}{0} + \frac{1}{2}\frac{\mathds 1}{2} ,
	\end{aligned}
\end{equation}with $\ket{\psi(\theta)}\coloneqq \cos(\theta/2)\ket{0} + e^{-i\phi}\sin(\theta/2)\ket{1}$. Accordingly, $(\sigma_x \int_{\text{N}}d\psi \psi^T \sigma_x )^*= \frac{1}{2}\ketbra{1}{1} + \frac{1}{2}\frac{\mathds 1}{2}$. We finally have 
\begin{equation}\label{eq:pwin}
	\begin{aligned}
		p_{\text{win}} = \frac{1}{2} + \frac{1}{2} \tr\left(\eta_N^A\sigma_z \right)
	\end{aligned}
\end{equation}The fidelity is given by 
\begin{equation}\label{eq:F}
	\begin{aligned}
		F = \int  d\psi \bra{\psi} \sum_a \mathcal I_a(\psi) \ket{\psi} = 2\int d\psi \tr\left( \psi\otimes\psi^T\eta\right) =\tr\left(\eta \left(\frac{2}{3}\Phi^+ +\frac{2}{3}\frac{\mathds 1}{2} \right) \right)  = \frac{1}{3} + \frac{2}{3} \tr\left( \Phi^+\eta\right),
	\end{aligned}
\end{equation}where $\eta=\eta_N+\eta_S$ and we have used $\sum_a\mathcal{I}_a(\psi) = 2\tr_A\left(\openone_{A'}\otimes \psi_A^T  \eta \right)$. First, we obtain the optimal-information-disturbance tradeoff for projective instruments. Taking $\sigma_{1\lvert(1,1)}=\alpha\ketbra{00}{00}$, $\sigma_{2\lvert(1,1)}=\alpha\ketbra{11}{11}$ and $\sigma_{1\lvert(2,0)}=\sigma_{2\lvert(0,2)} = (1/2-\alpha) \phi^+$. Then, $\eta_1 = \alpha\ketbra{00}{00}+(1/2-\alpha) \phi^+$ and $\eta_2 = \alpha\ketbra{11}{11}+(1/2-\alpha) \phi^+$. Note that this instrument is projective. By construction $\tr_2\sum_a \sigma_{a|\vec{r}}\propto \frac{\openone}{2}$ and $\tr_1\sum_a \sigma_{a|\vec{r}}\propto \frac{\openone}{2}$ and that normalisation holds. From Eqs. \eqref{eq:pwin} and \eqref{eq:F}, we further have $F_{\text{sim}}=1-2\alpha/3$ and $p_{\text{win}}=(1+\alpha)/2$, from which we finally obtain 
\begin{align}
	F_{\text{sim}} = \frac{5-4p_{\text{win}}}{3}.
\end{align}

Next, we obtain the optimal information-tradeoff for arbitrary quantum instruments. Taking the Choi state of the unsharp Pauli-Z instrument, we find 
\begin{align}
	F_Q = \frac{1}{3} + \frac{1+\sqrt{1-\gamma^2}}{3} = \frac{2}{3} + \frac{\sqrt{1-\gamma^2}}{3}
\end{align} and 
\begin{align}
	p_{\text{win} } = \frac{1}{2} + \frac{\gamma}{4},
\end{align}or stated in a different form,
\begin{align}
	F_Q=\frac{2}{3}+\frac{1}{3}\sqrt{16p_\text{win}(1-p_\text{win})-3}.
\end{align}
This corresponds precisely to the red curve in Fig. 3 obtained with the SDP.

\section{Data for sequential CHSH violations with local randomness and PIs}
\label{app:SeqBell}
Alice and Bob have binary inputs $x,y\in\{0,1\}$ and binary outputs $a,b\in\{0,1\}$ respectively. They share the two-qubit state  $\Psi$ on which they perform a CHSH test. Bob's input is associated with an instrument with Kraus operators $\{K_{b|y}\}_b$. The quantum output of the instrument is a qubit that is relayed to Charlie. The state then shared between Alice and Charlie becomes
\begin{equation}
	\Psi_\text{post}=\frac{1}{2}\sum_{b,y} (\openone\otimes K_{b|y})\Psi(\openone\otimes K_{b|y}^\dagger).
\end{equation}
In Choi representation this becomes
\begin{equation}
	\Psi_\text{post}=\frac{1}{2}\sum_{b,y}\tr_{B}\left[\left(\openone_{B'}\otimes \Psi_{AB}^{T_B}\right)\left(\openone_A\otimes \eta_{b|y}^{B'B}\right)\right],
\end{equation}
where $\{\eta_{b|y}^{B'B}\}$ are the Choi operators of Bob's $y$'th instrument. Let Alice and Charlie perform sharp measurements corresponding to observables $\{A_x\}$ and $\{C_z\}$ respectively, for $z=0,1$ being Charlie's input. The CHSH parameter between Alice and Bob becomes
\begin{align}
	\mathcal{S}_\text{AB}=2\sum_{x,y}(-1)^{xy}\tr\left(\Psi_{AB} \left(A_x\otimes (\eta_{0|y}^{A'}-\eta_{1|y}^{A'})\right) \right),
\end{align}
Similarly, the CHSH parameter between Alice and Charlie becomes
\begin{equation}
	\mathcal{S}_\text{AC}=\sum_{x,z}(-1)^{xz} \expect{A_x,C_z}_{\Psi_\text{post}}.
\end{equation}
For given values of $\{A_x,C_z,\Psi_\text{AB},\mathcal{S}_{\text{AB}}\}$, the optimal $\mathcal{S}_\text{AC}$ can be computed as an SDP.

In order to explore the trade-off between $\mathcal{S}_\text{AB}$ and $\mathcal{S}_\text{AC}$, we use an alternating convex search procedure - also known as a seesaw. Firstly, using Theorems 1 and 2, we constrain our SDP over Bob's instrument so that it characterises PIs acting on qubits.  Secondly, instead of optimising over both $\Psi_\text{AB}$ and $A_x$, it is sufficient to consider only the assemblage prepared remotely by Alice for Bob. This is given by $\tau_{a|x}=\tr_A\left(\frac{\mathds{1}+(-1)^a A_x}{2}\otimes\openone \rho_\text{AB}\right)$. By the GHJW theorem, every assemblage satisfying the basic properties $\tau_{a|x}\succeq 0$, $\sum_a \tau_{a|x}=\tau$ such that $\tr(\tau)=1$ can be realised by some choice of state $\Psi_\text{AB}$ and local observables $A_x$. Note that these are semidefinite constraints that can be addressed by SDP. Putting this together, the seesaw routine for optimising the trade-off becomes

\begin{enumerate}
	\item Select a random assemblage $\{\tau_{a|x}\}$ and a random measurement $\{C_z\}$.
	\item Evaluate the SDP over PIs that optimises $\mathcal{S}_\text{AC}$ under the constraint $\mathcal{S}_\text{AB}\geq \alpha$, for some pre-selected value of $\alpha$.
	\item Evaluate the SDP over assemblages $\{\tau_{a|x}\}$ that optimises $\mathcal{S}_\text{AC}$ under the constraint $\mathcal{S}_\text{AB}\geq \alpha$.
	\item Evaluate the SDP over the measurement  $\{C_z\}$ that optimises $\mathcal{S}_\text{AC}$.
	\item Repeat points 2-4 until convergence is achieved.
\end{enumerate}
The above procedure depends on the random starting point. Therefore, we have for every choice of $\alpha$ repeated the above process for 25 different random starting points and then selected the best result. This was done for several values of $\alpha$, allowing us to probe the trade-off. The results illustrated in Fig. 4 of the main text correspond to the data table ~\ref{tabdat}.

\begin{table}[]
	\begin{tabular}{c|ccclllllllllll}
		\centering	$\mathcal{S}_\text{AB}$ & 2.00   & 2.01   & 2.02   & 2.03   & 2.04   & 2.05   & 2.06   & 2.07   & 2.08   & 2.09   & 2.10   & 2.11   & 2.12   & 2.13   \\ \cline{1-15}
		\centering	$\mathcal{S}_\text{AC}$                         & 2.1535  & 2.1334 & 2.1179 & 2.1025 & 2.0885 & 2.0753 & 2.0638 & 2.0523 & 2.0405 & 2.0310 & 2.0211 & 2.0119 & 2.0033& 1.9951
	\end{tabular}\caption{Data table for sequential CHSH violations with local randomness and PIs}\label{tabdat}
\end{table}

\section{Dimension-scalable advantage over projective instruments}\label{AppHighDim}
We consider the L\"uders instrument corresponding to Kraus operators
\begin{equation}
	K_a=\sqrt{\frac{1+\gamma}{2}}\ketbra{a}+\sqrt{\frac{1-\gamma}{2(d-1)}}\left(\openone-\ketbra{a}\right),
\end{equation}
where $\gamma\in[0,1]$ is the sharpness parameter.  The Choi representatin of the instrument is $\eta_a=\left(K_a\otimes\openone\right)\phi^+\left(K_a\otimes\openone\right)^\dagger$, which becomes
\begin{equation}
	\eta_a=\frac{1+\gamma}{2d}\ketbra{aa}+\frac{\sqrt{1-\gamma^2}}{2d\sqrt{d-1}}\sum_{l\neq a}\left(\ketbra{aa}{ll}+\ketbra{ll}{aa}\right)+\frac{1-\gamma}{2d(d-1)}\sum_{k\neq a}\sum_{l\neq a}\ketbra{kk}{ll}.
\end{equation}

We consider the mixture with dephasing noise, namely
\begin{equation}
	\eta_a^v=v\eta_a+(1-v)\eta_a^\text{noise},
\end{equation}
where $v\in[0,1]$ is the visibility and $\eta_a^\text{noise}=\frac{1}{d^2}\sum_{i=0}^{d-1}\ketbra{ii}$. We will show that the critical visibility for PI-simulation is bounded as follows, 
\begin{align}\label{F4}
	v_\text{deph}(d) \leq  \frac{2(d-1)}{-2+d\left(1+\gamma+\sqrt{(1-\gamma^2)(d-1)}\right)}. 
\end{align}
Specifically, we will first construct a  PI-simulation model that achieves $v_\text{depth}(d)$ for some values of $\gamma$. Then, we will then prove that no higher visibility can be achieved for any $\gamma$.

\subsection{Simulation model}
Consider the Choi representation of a generic PI,
\begin{equation}
	\mu_a\!=\sum_{\vec{r}} q_{\chi,\vec{r}} \nu_{a,\chi,\vec{r}},
\end{equation}
where 
\begin{equation}
	\nu_{a,\chi,\vec{r}}=\left(\Lambda_{a,\chi,\vec{r}}\otimes\openone\right)\left[E_{a|\chi,\vec{r}}\otimes \openone \phi^+ E_{a|\chi,\vec{r}}\otimes \openone\right].
\end{equation}
In the simulation, we will not use the random variable $\chi$ and therefore we discard it throughout this discussion. Our simulation uses only three classes of rank-vectors. These are
\begin{align}
	& (d,0,\ldots,0) +\text{permutations}, && (d-1,1,0,\ldots,0)+\text{permutations},  &&&(1,1,\ldots,1).
\end{align}
We refer to these classes as $C_1$, $C_2$ and $C_3$ respectively. They contain $d$, $d(d-1)$ and $1$ elements respectively.

\textbf{Class 1.} For the first class, let us denote by $\vec{s}^{(j)}$ the rank-vector that is non-zero only at position $j$ where the value is $d$.  The only relevant projective measurement is $E_{a|\vec{s}^{(j)}}=0$ if $a\neq j$ and $E_{a|\vec{s}^{(a)}}=\openone$.  When $a=j$, we select the subsequent CPTP map as the identity channel. Hence, the Choi operators become
\begin{equation}
	\nu_{a,\vec{s}^{(j)}}=\begin{cases}
		0 & \text{if } a\neq j\\
		\phi^+ &\text{if }a=j 
	\end{cases}.
\end{equation}
We select the prior as uniform, namely $q_{\vec{s}^{(j)}}=\alpha$, for every choice of $j$.

\textbf{Class 2.} Consider the second class of rank-vectors. We write $\vec{t}^{(j,k)}$ for the vector that is zero everywhere except at positions $j$ and $k$ where the values are $d-1$ and $1$ respectively. We select the projective measurements as $E_{j|\vec{t}^{(j,k)}}=\openone-\ketbra{k}$, $E_{k|\vec{t}^{(j,k)}}=\ketbra{k}$ and $E_{a|\vec{t}^{(j,k)}}=0$ otherwise. The CPTP map is always the identity channel. The Choi operators become
\begin{equation}
	\nu_{a,\vec{t}^{(j,k)}}=\begin{cases}
		\phi^+_{k} & \text{if } a=j\\
		\frac{1}{d}\ketbra{kk} &\text{if } a=k\\
		0 & \text{otherwise} 
	\end{cases},
\end{equation}
where
\begin{equation}
	\phi_j^+=\frac{1}{d}\sum_{i\neq j}\sum_{k\neq j}\ketbra{ii}{kk}
\end{equation}
is the sub-normalised maximally entangled state the $(d-1)$-dimensional subspace orthogonal to $\ket{j}$. We select a uniform prior, namely $q_{\vec{t}^{(j,k)}}=\delta$.

\textbf{Class 3.} Lastly, consider the  third class, which has only a single rank-vector, $\vec{u}=(1,\ldots,1)$. Select the projective measurement as $E_{a|\vec{u}}=\ketbra{aa}$ and the CPTP maps as the identity channel. The Choi operators become
\begin{equation}
	\nu_{a,\vec{u}}=\frac{1}{d}\ketbra{aa}.
\end{equation}
We select the prior as $q_{\vec{u}}=\beta$.

Putting it together, the simulated Choi operators become
\begin{align}
	& \mu_a=\alpha\phi^+ + \frac{\beta}{d}\ketbra{aa}+\delta\sum_{k\neq a}\phi^+_k+\delta(d-1)\frac{1}{d}\ketbra{aa},
\end{align}
where normalisation corresponds to the constraint
\begin{equation}
	\alpha|C_1|+\delta|C_2|+\beta|C_3|=d\alpha+d(d-1)\delta+\beta=1.
\end{equation}
We now need to solve the simulation equation, $\mu_a=\eta_a^v$. Upon inspection, this can be broken down into the following equations.
\begin{align}
	&v\frac{1+\gamma}{2d}+\frac{1-v}{d^2}=\frac{\alpha}{d}+\frac{\beta}{d}+\delta\frac{2(d-1)}{d}\\
	& v\frac{1-\gamma}{2d(d-1)}+\frac{1-v}{d^2}=\frac{\alpha}{d}+\delta \frac{d-2}{d}\\
	&v\frac{1-\gamma}{2d(d-1)}=\frac{\alpha}{d}+\delta\frac{d-3}{d}\\
	&v\frac{\sqrt{1-\gamma^2}}{2d\sqrt{d-1}}=\frac{\alpha}{d}+\delta\frac{d-2}{d}
\end{align}
The solution is 
\begin{align}
	& \alpha=\frac{-3 \sqrt{1-\gamma ^2}-\sqrt{1-\gamma ^2} d^2+d \left(4 \sqrt{1-\gamma ^2}+\gamma  \left(-\sqrt{d-1}\right)+\sqrt{d-1}\right)+2 \gamma  \sqrt{d-1}-2
		\sqrt{d-1}}{\sqrt{1-\gamma ^2} d^2+d \left(-\sqrt{1-\gamma ^2}+\gamma  \sqrt{d-1}+\sqrt{d-1}\right)-2 \sqrt{d-1}}\\
	&\beta=\frac{(d-1) \left(2 \sqrt{d-1}-\sqrt{1-\gamma ^2} d\right)}{\sqrt{1-\gamma ^2} d^2+d \left(-\sqrt{1-\gamma ^2}+\gamma  \sqrt{d-1}+\sqrt{d-1}\right)-2 \sqrt{d-1}}\\
	& \delta=\frac{-\sqrt{1-\gamma ^2}+\sqrt{1-\gamma ^2} d+\gamma  \sqrt{d-1}-\sqrt{d-1}}{\sqrt{1-\gamma ^2} d^2+d \left(-\sqrt{1-\gamma ^2}+\gamma 
		\sqrt{d-1}+\sqrt{d-1}\right)-2 \sqrt{d-1}},
\end{align}
and it gives the simulation visibility 
\begin{equation}
	v=\frac{2 (d-1)^{3/2}}{\sqrt{1-\gamma ^2} d^2+d \left(-\sqrt{1-\gamma ^2}+\gamma  \sqrt{d-1}+\sqrt{d-1}\right)-2 \sqrt{d-1}}.
\end{equation}
Note however that the coefficient $\beta$ is not always non-negative. Solving $\beta=0$ shows that it is a valid solution only when
\begin{equation}
	\gamma\geq \frac{d-2}{d}.
\end{equation}
Thus, the above model is valid in this regime.

\subsection{Upper bound on critical visibility}
We now prove that  no projective simulation with a visibility higher than \eqref{F4} is possible. One way to obtain an upper bound on the critical visibility is to consider the conditions formulated in Eq.~\eqref{choiA} (based on Theorem 1) for every $\gamma$ and $d$. As discussed before, we can relax this optimisation problem by substituting the Schmidt number constraint in Eq.~\eqref{choiA} with only a necessary condition on for a state having a given Schmidt number. Around Eq.~\eqref{prog}, we have suggested to use the generalised reduction map for this purpose; $\Theta_s(X)=\tr(X)\openone-\frac{1}{s}X$. The primal SDP, whose solution upper bounds, the critical visibility becomes
\begin{align}\label{primal}\nonumber
	\max_{v,\sigma} & \quad v\\
	& v\eta_a+(1-v)\eta_a^\text{noise}= \sum_{\vec{r}} \sigma_{a|\vec{r}}, \quad \forall a\\
	&\tr(\sigma_{a|\vec{r}})=q_{\vec{r}}\frac{r_a}{d},  \quad \forall a,\vec{r}\\
	& \sum_{a}\tr_{A'}(\sigma_{a|\vec{r}})=q_{\vec{r}}\frac{\openone}{d}, \quad \forall \vec{r},\\
	& (\Theta_{r_a}\otimes \openone)[\sigma_{a|\vec{r}}]\succeq 0, \quad \forall a,\vec{r}\\
	&\sigma_{a|\vec{r}}\succeq 0, \quad \forall a,\vec{r}.
\end{align}
While this can be computed for given $(\gamma,d)$, it is less straightforward to compute it as a closed expression for any $(\gamma,d)$. Therefore, we instead use the  duality theorem of SDPs to derive an analytical upper bound on the solution of the primal. This is achieved by constructing a feasible point of its dual formulation.

We use standard procedure to derive the dual SDP. The dual becomes
\begin{align}\label{dual}\nonumber
	\min_{t,B,W,Z} \quad& 1+\sum_a\tr\left(W_a\eta_a\right)\\\nonumber
	&\text{s.t. } \quad  1+\sum_a\tr\left(W_a\eta_a\right)=\sum_a\tr\left(W_a\eta^\text{noise}_a\right),\\\nonumber
	& W_a+\frac{1}{d}\tr(B_{\vec{r}})\openone+\frac{1}{r_a}Z_{a|\vec{r}}+\frac{1}{d}\left(\sum_{l=1}^N r_lt_{l|\vec{r}}\right)\openone-\openone\otimes B_{\vec{r}}-t_{a|\vec{r}}\openone-\openone\otimes Z^A_{a|\vec{r}}\succeq 0 \quad \forall a,\vec{r},\\
	& Z_{a|\vec{r}}\succeq 0 \quad \forall a,\vec{r}.
\end{align} 
Here, $Z$ and $W$ are operators on $\mathcal{H}_{A'}\otimes \mathcal{H}_A$, $B$ are operators on $\mathcal{H}_A$ and $t$ are scalars. 

Now, we construct the feasible point in the space of $(t,B,W,Z)$. To this end, we first select  $t_{a|\vec{r}}=t$ $\forall a,\vec{r}$. In fact, this eliminates all dependence on $t$ since $\sum_{l} r_l=d$. Next, we select  $Z_{a|\vec{r}}=s\phi^+$ $\forall a,\vec{r}$, for some $s\geq 0$. This ensures that the final constraint in \eqref{dual} is satisfied. Also, let us choose the operators $B_{\vec{r}}$ such that $\tr(B_{\vec{r}})=0$. The constraints in the dual have now simplified to 
\begin{align}\nonumber
	&1+\sum_a\tr\left(W_a\eta_a\right)=\sum_a\tr\left(W_a\eta^\text{noise}_a\right),\\\nonumber
	& W_a+\frac{s}{r_a}\phi^+-\openone\otimes
	B_{\vec{r}}-\frac{s}{d}\openone\succeq 0 \quad \forall a,\vec{r},\\\nonumber
	& \tr(B_{\vec{r}})=0 \quad \forall \vec{r},\\
	& s\geq 0. \label{eq:Wconst}
\end{align}
Let us now select the trace-less operators $B_{\vec{r}}$ to be diagonal: $B_{\vec{r}}=\sum_{l=1}^d c_{l,\vec{r}}\ketbra{l}$, for some real coefficients $c_{l,\vec{r}}$. Define $c_{l,\vec{r}}=\alpha (r_l-1)$ and note that $\sum_l c_{l,\vec{r}}=0$ since $\sum_l r_l=d$. Thus, we have simplified the second constraint above to
\begin{align}\label{step}
	& W_a+\frac{s}{r_a}\phi^+-\left(\frac{s}{d}-\alpha\right)\openone-\alpha\openone\otimes
	\left(\sum_{l=1}^d r_l\ketbra{l}\right)\succeq 0 \quad \forall a,\vec{r}.
\end{align}
We select the following form of $W_a$,
\begin{align}
	W_a = \beta\sum_{(i,j)\neq (a,a)}\ketbra{ij}{ij} + \alpha\sum_{j\neq a} (\ketbra{jj}{aa} + \ketbra{aa}{jj}),
\end{align}
where $\alpha$ and $\beta$ are free real variables. Using the expressions for $W_a$, $\eta_a$ and $\eta_a^{\text{noise}}$, we have 
\begin{align}
	\tr(W_a\eta^\text{noise}_a) = \beta\frac{d-1}{d} \quad \text{and} \quad \tr(W_a\eta_a) =  \frac{1}{2} \beta  (1-\gamma )+\alpha  \sqrt{\left(1-\gamma ^2\right) (d-1)}
\end{align}Fixing $\beta = -2\alpha$, then using the first constraint in \eqref{eq:Wconst}, we have
\begin{align}
	\alpha = -\frac{d}{(d-2)+d\gamma + d\sqrt{(1-\gamma ^2)(d-1)}}. 
\end{align}
Evaluating the objective function in \eqref{dual}, we obtain precisely the expression \eqref{F4}.

However, in order to complete the proof we still need to show that the second constraint in Eq.~\eqref{eq:Wconst} can be satisfied. To this end, we select  $s=s'$ with 
\begin{align}
	s' = \frac{d^2}{(d-2) +d\gamma+d\sqrt{(d-1)(1-\gamma^2)}},
\end{align}
and for simplicity we define
\begin{align}
	O_{a,\vec r} =   W_a+\frac{s'}{r_a}\phi^+-\left(\frac{s'}{d}-\alpha\right)\openone-\alpha\openone\otimes
	\left(\sum_{l=1}^d r_l\ketbra{l}\right).
\end{align}
In the next subsection, we show that $O_{a,\vec{r}}$ is positive semidefinite for every $(a,\vec{r})$ and every $(\gamma,d)$.

\subsection{Positive-semidefiniteness of $O_{a,\vec r}$}
We show that all eigenvalues of $O_{a,\vec r}$ are non-negative. To this end, we consider its characteristic polynomial. Upon inspection, it can be expressed as 
\begin{align}
	\det\left(O_{a,\vec r} - \lambda\mathds 1 \right) = 	\mathds P_{d,a}^{\vec r}\prod_{j=1}^{d}( \lambda - \frac{s'}{d}r_j)^{d-1}
\end{align}where $\mathds P_{d,a}^{\vec r} = (A_0+A_1\tilde \lambda +A_2\tilde \lambda^2+...+A_d\tilde \lambda^d)$ is a polynomial of degree at most $d$, $r_j$ are elements of the chosen rank vector $\vec r$ and $\tilde \lambda=\frac{d\,r_a}{s'} \lambda$. To express the coefficients of $\mathds P_{d,a}^{\vec r}$, we define $S_{a}^p$ as the sum of all unique $p$-products of the elements of the rank vector excluding $r_a$, such that in each product, an individual element appears at most once. For example, in $d=4$, $S_1^2 = r_2 r_3+ r_2r_4+r_3r_4$ and in $d=5$, $S_2^3 = r_1r_3r_4+r_1r_3r_5+r_3r_4r_5$. Also, by definition, $S^0_a=1$ and $S^p_a=0$ $\forall\, p<0$ or $p\geq d$. With this definition, the coefficients can be expressed as

\begin{equation}
	A_n =(-1)^{d+n} \left[r_a^{d-n-1}(r_a^2-2r_a+(n+1))S_a^{d-n-1} + r_a^{d-n}S^{d-n}_a\right]  ,\,\,n=0,1,...,d
\end{equation}

To confirm that this is indeed the form of the coefficients, we have numerically verified it till $d=12$ for all possible rank vectors. 

The positive-semidefinite property of $O_{a,\vec r}$ can be seen by inspecting its eigenvalues. The  eigenvalues $s'\,r_j/d$ are evidently all non-negative.  We require the rest of the eigenvalues, which are the roots of the polynomial $\mathds P_{d,a}^{\vec r}$, to also be non-negative. First note that $A_0=(-1)^dr_a^{d-2}(r_a-1)^2\Pi_{j=1}^dr_j=0$ $\forall\, \vec r$. $\Pi_{j=1}^dr_j$ is nonzero only for one rank vector, $(1,1,...,1)$. However, for this rank vector, $r_a-1=0$ $\forall\,a$, implying $A_0=0$. This implies that zero is a root of $\mathds P_{d,a}^{\vec r}$. For generality, we assume that all other coefficients of $\mathds P_{d,a}^{\vec r}$ are non-zero.  For many classes of rank vectors, other coeffecients can be zero. Such situations can be handled in a completely analogous manner, albeit case-by-case. First note that $O_{a,\vec r}$ is Hermitian and therefore the roots of $\mathds P_{d,a}^{\vec r}$ must be real. Then, by Descartes' rule of signs \cite{Descartes1637,Sullivan2017}, the number of positive roots of $\mathds P_{d,a}^{\vec r}$ is exactly equal to the number of sign changes in the sequence of its coefficients. Note the following property in the structure of the coefficients: $r_a^{d-n-1}(r_a^2-2r_a+(n+1))S_a^{d-n-1} + r_a^{d-n}S^{d-n}_a\geq0$. The left term's non-negativity follows from $(r_a^2-2r_a+(n+1))\geq 0$ $\forall\, r_a\geq 0$ and $n\geq 0$. The right term is naturally non-negative.  This means that the sign of $A_n$ is determined solely by $(-1)^{d+n}$. Therefore, we have exactly $d-1$ sign changes in the non-zero coefficients of $\mathds P_{d,a}^{\vec r}$. This implies that all remaining $d-1$ of its roots are positive.  \qed

\section{Projective simulation for worst-case noise}\label{AppWorst}
We consider the L\"uders instrument corresponding to Kraus operators
\begin{equation}
	K_a=\sqrt{\frac{1+\gamma}{2}}\ketbra{a}+\sqrt{\frac{1-\gamma}{2(d-1)}}\left(\openone-\ketbra{a}\right),
\end{equation}
where $\gamma\in[0,1]$ is the sharpness parameter.  The Choi representation of the instrument is $\eta_a=\left(K_a\otimes\openone\right)\phi^+\left(K_a\otimes\openone\right)^\dagger$, which becomes
\begin{equation}
	\eta_a=\frac{1+\gamma}{2d}\ketbra{aa}+\frac{\sqrt{1-\gamma^2}}{2d\sqrt{d-1}}\sum_{l\neq a}\left(\ketbra{aa}{ll}+\ketbra{ll}{aa}\right)+\frac{1-\gamma}{2d(d-1)}\sum_{k\neq a}\sum_{l\neq a}\ketbra{kk}{ll}.
\end{equation}
We consider the mixture with worst-case noise, 
\begin{equation}
	\eta_a^v=v\eta_a+(1-v)\eta_a^\text{worst},
\end{equation}
where $v\in[0,1]$ is the visibility and $\eta_a^\text{worst}$ is some arbitrary noise, whose form we will determine. We will show that there exists a PI-simulation that achieves the visibility  
\begin{align}\label{visi}
	v=  \frac{d-1}{-\gamma -\sqrt{\left(1-\gamma ^2\right) (d-1)}+\sqrt{d(1-\gamma ) \left((\gamma +1) d-2 \left(\gamma +\sqrt{\left(1-\gamma ^2\right) (d-1)}\right)\right)}+d}.
\end{align}
Thus, this constitutes a lower bound on the the critical visibility, $v\leq v_\text{worst}(d)$, but we conjecture it to be optimal for any $d$ and $\gamma$. This conjecture is supported by evaluating the SDP for PIs for $d=2,3$ and many selected values of $\gamma$.

It can be checked that the minimum visibility for given $d$ lies at $\gamma=1/\sqrt{d}$. The visibility then becomes 
\begin{align}
	v^{\text{min}}_{\text{worst}}(d) = \frac{1}{2}\left(1+\frac{1}{\sqrt{d}} \right).
\end{align}
Thus, for any $d$ and any $\gamma$, there always exists a PI-simulation that achieves visibility $v=\frac{1}{2}$.

We now show how to construct the PI model. Consider the Choi representation of a generic PI,
\begin{equation}
	\mu_a\!=\sum_{\vec{r}} q_{\chi,\vec{r}} \nu_{a,\chi,\vec{r}},
\end{equation}
where 
\begin{equation}
	\nu_{a,\chi,\vec{r}}=\left(\Lambda_{a,\chi,\vec{r}}\otimes\openone\right)\left[E_{a|\chi,\vec{r}}\otimes \openone \phi^+ E_{a|\chi,\vec{r}}\otimes \openone\right].
\end{equation}
In the simulation, we will not use the random variable $\chi$ and therefore we discard it throughout this discussion. Our simulation uses only two classes of rank-vectors. These are
\begin{align}
	(d,0,...,0) +\text{permutations} \quad \text{and} \quad (1,1,...,1). 
\end{align}For the first set of rank-vectors, the associated Choi operators corresponding to rank $d$ are identical and given by $\alpha \phi^+$. For the other rank-vector, the associated Choi operators are given by $\beta\ketbra{aa}{aa}$, corresponding to outcome $a$. Then, the simulated Choi operators become $\mu_a = \alpha \phi^+ + \beta \ketbra{aa}{aa}$. Furthermore, we take the following noise model,
\begin{align}
	\eta_a^{\text{worst}} = x_1\ketbra{aa}{aa} -\sqrt{x_1 x_2} \sum_{j\neq a} (\ketbra{aa}{jj}+\ketbra{jj}{aa})  +x_2\sum_{k\neq a} \sum_{l\neq a}\ketbra{kk}{ll} 
\end{align}We now need to solve the simulation equation $\mu_a=\eta_a^v$. Upon inspection, this can be broken down into the following equations,
\begin{equation}
	\begin{cases}
		&v	\frac{(\gamma +1) }{2 d}+(1-v)x_1=\frac{\alpha }{d} +\beta\\
		&v\frac{\sqrt{1-\gamma ^2} }{2 d \sqrt{d-1}}-(1-v)\sqrt{x_1 x_2}=\frac{\alpha }{d}\\
		&v\frac{(1-\gamma ) }{2 d (d-1)}+ (1-v)x_2=\frac{\alpha }{d} \\
		&(d-1) x_2+x_1=\frac{1}{d}\\
		&\alpha + \beta = \frac{1}{d},
	\end{cases}
\end{equation}where the last two equations are normalisation constraints on the noise and the simulation Choi operators, respectively. The above set of equations can be solved to give the visibility in Eq.~\eqref{visi}. The other parameters are given by
\begin{equation}
	\begin{aligned}
		&\alpha = \frac{v\sqrt{\left(1-\gamma ^2\right) (d-1)} +(d-1)(1-v\gamma)}{2 d(d-1) }\\
		&\beta = \frac{ -v\sqrt{\left(1-\gamma ^2\right) (d-1)}+(d-1)(1+v\gamma)}{2 d(d-1)}\\
		& x_1 = \frac{-v \sqrt{\left(1-\gamma ^2\right) (d-1)} - v\gamma + d (1-v) + 1}{2 d^2 (1-v)}\\
		& x_2 = \frac{v\sqrt{\left(1-\gamma ^2\right) (d-1)}+v\gamma + d (1-v) - 1}{2 (d-1) d^2 (1-v)}. \\
	\end{aligned}
\end{equation}

\end{document}